\spnewtheorem{myconjecture}{Conjecture}[section]{\bf}{\it}
\newcommand{\torso}{\textup{torso}\xspace}
\newcommand{\cA}{\mathcal{A}}
\newcommand{\cS}{\mathcal{S}}
\newcommand{\bA}{\mathbf{A}}
\newcommand{\bS}{\mathbf{S}}
\newcommand{\hhom}{\textsc{Hom(${H}$)}\xspace}
\newcommand{\lhom}{\textsc{L-Hom(${H}$)}\xspace}
\newcommand{\dlhom}{\textsc{DL-Hom(${H}$)}\xspace}
\newcommand{\dlhoma}[1]{\textsc{DL-Hom(${#1}$)}\xspace}
\newcommand{\dlhomcomp}{\textsc{DL-Hom(${H}$)-Compression}\xspace}
\newcommand{\dlhomdisjcomp}{\textsc{DL-Hom(${H}$)-Disjoint-Compression}\xspace}
\newcommand{\dlhombipcomp}{\textsc{DL-Hom}(${H}$)-\textsc{Bipartite-Compression}\xspace}
\newcommand{\dlhombipcomps}{\textsc{BC}($H$)\xspace}
\newcommand{\dlhomfsfcig}{\textsc{DL-Hom}(${H}$)-\textsc{Fixed-Side-Fixed-Component-Isolated-Good}\xspace}
\newcommand{\dlhomfsfc}{\textsc{DL-Hom}(${H}$)-\textsc{Fixed-Side-Fixed-Component}\xspace}
\newcommand{\dlhomfsfcs}{\textsc{FS-FC(${H}$)}\xspace}
\begin{document}

\title{List H-Coloring a Graph by Removing Few Vertices\thanks{Supported by ERC Starting Grant PARAMTIGHT (No. 280152)}}
\author{Rajesh Chitnis\inst{1}\thanks{Supported in part by NSF CAREER award 1053605, NSF grant CCF-1161626, ONR YIP award N000141110662,
    DARPA/AFOSR grant FA9550-12-1-0423, a University of Maryland Research and Scholarship Award (RASA) and
    a Summer International Research Fellowship from University of Maryland.}
    \and L\'aszl\'o Egri\inst{2} \and D\'aniel Marx\inst{2}}

\institute{Department of Computer Science, University of Maryland at College Park, USA,  \email{rchitnis@cs.umd.edu} \and
  Institute for Computer Science and Control , Hungarian Academy of Sciences
(MTA SZTAKI), Budapest, Hungary. \email{\{dmarx@cs.bme.hu, laszlo.egri@mail.mcgill.ca\}} }

\maketitle

\begin{abstract}
  In the deletion version of the list homomorphism problem, we are given graphs $G$ and $H$, a list $L(v)\subseteq
  V(H)$ for each vertex $v\in V(G)$, and an integer $k$. The task is
  to decide whether there exists a set $W \subseteq V(G)$ of size at
  most $k$ such that there is a homomorphism from $G \setminus W$ to
  $H$ respecting the lists. We show that \dlhom, parameterized by $k$ and $|H|$, is fixed-parameter
  tractable for any $(P_6,C_6)$-free bipartite graph $H$; already for
  this restricted class of graphs, the problem generalizes Vertex
  Cover, Odd Cycle Transversal, and Vertex Multiway Cut parameterized by the size of the cutset and the number of terminals. We conjecture that
  \dlhom is fixed-parameter tractable for the class of graphs $H$ for
  which the list homomorphism problem (without deletions) is
  polynomial-time solvable; by a result of Feder et
  al.~\cite{Feder/et_al:99:LHC}, a graph $H$ belongs to this class precisely if
  it is a bipartite graph whose complement is a circular arc graph. We
  show that this conjecture is equivalent to the fixed-parameter
  tractability of a single fairly natural satisfiability problem,
  \emph{Clause Deletion Chain-SAT}.\end{abstract}

\section{Introduction}

Given two graphs $G$ and $H$ (without loops and parallel edges; unless otherwise stated, we consider only such graphs throughout this paper), a {\em homomorphism} $\phi:G\rightarrow H$ is a mapping $\phi:V(G)\rightarrow V(H)$ such that
$\{u,v\}\in E(G)$ implies $\{\phi(u),\phi(v)\}\in E(H)$; the corresponding algorithmic problem {\em Graph Homomorphism} asks
if $G$ has a homomorphism to $H$. It is easy to see that $G$ has a homomorphism into the clique $K_c$ if and only if $G$ is
$c$-colorable; therefore, the algorithmic study of (variants of) Graph Homomorphism generalizes the study of graph
coloring problems (cf.~Hell and Ne{\v{s}}et{\v{r}}il~\cite{hell-book}). Instead of graphs, one can consider homomorphism
problems in the more general context of relational structures. Feder and Vardi~\cite{DBLP:journals/siamcomp/FederV98}
observed that the standard framework for Constraint Satisfaction Problems (CSP) can be formulated as homomorphism problems for
relational structures. Thus variants of Graph Homomorphism form a rich family of problems that are more general than classical
graph coloring, but does not have the full generality of CSPs.

{\em List Coloring} is a generalization of ordinary graph coloring: for each vertex $v$, the input contains a list $L(v)$ of
allowed colors associated to $v$, and the task is to find a coloring where each vertex gets a color from its list.  In a
similar way, \emph{List Homomorphism} is a generalization of Graph Homomorphism: given two undirected graphs $G, H$ and a list
function $L:V(G)\rightarrow 2^{V(H)}$, the task is to decide if there exists a list homomorphism $\phi:G\rightarrow H$, i.e., a
homomorphism $\phi:G\rightarrow H$ such that for every $v\in V(G)$ we have $\phi(v)\in L(v)$. The List Homomorphism problem
was introduced by Feder and Hell~\cite{FH:98:LHR} and has been studied
extensively~\cite{Egri/et_al:2011:Complexity,Feder/et_al:07:LHG,Feder/et_al:99:LHC,FederHH03,Gutin06:mincosthomomorphism,Hell/Rafiey:11:DLH}. It is also referred to as List $H$-Coloring the graph $G$ since in the special case of $H=K_c$ the problem is equivalent to list coloring where every list is a subset of $\{1,\dots,c\}$.

An active line of research on homomorphism problems is to study the complexity of the problem when the target graph is fixed. Let
$H$ be an undirected graph. The Graph Homomorphism and List Homomorphism problems with fixed target $H$ are denoted by \hhom and \lhom, respectively. A classical result of Hell and Ne{\v{s}}et{\v{r}}il \cite{Hell/Nesetril:90:H-Coloring} states that \hhom\ is
polynomial-time solvable if $H$ is bipartite and NP-complete otherwise. For the more general List Homomorphism problem, Feder
et al.~\cite{Feder/et_al:99:LHC} showed that \lhom\ is in P if $H$ is a bipartite graph whose complement is a
circular arc graph, and it is NP-complete otherwise. Egri et al.~\cite{Egri/et_al:2011:Complexity} further refined this
characterization and gave a complete classification of the complexity of \lhom: they showed that the problem is complete for NP, NL, or L, or otherwise the problem is first-order definable.

In this paper, we increase the expressive power of (list) homomorphisms by allowing a bounded number of vertex deletions from
the left-hand side graph $G$. Formally, in the \dlhom problem we are given as input an undirected graph $G$, an integer $k$, a
list function $L:V(G)\rightarrow 2^{V(H)}$ and the task is to decide if there is a {\em deletion set} $W\subseteq V(G)$ such
that $|W|\leq k$ and the graph $G\setminus W$ has a list homomorphism to $H$. Let us note that \dlhom is NP-hard already
when $H$ consists of a single isolated vertex: in this case the problem is equivalent to \textsc{Vertex Cover}, since removing
the set $W$ has to destroy every edge of $G$.

We study the parameterized complexity of \dlhom parameterized by the number of allowed vertex deletions and the size of the
target graph $H$. We show that \dlhom is fixed-parameter tractable (FPT) for a rich class of target graphs $H$. That is, we
show that \dlhom can be solved in time $f(k,|H|) \cdot n^{O(1)}$ if $H$ is a $(P_6,C_6)$-free bipartite graph, where $f$ is a
computable function that depends only of $k$ and $|H|$ (see \cite{downey-fellows,flum-grohe,niedermeier} for more background
on fixed-parameter tractability). This unifies and generalizes the fixed-parameter tractability of certain well-known problems
in the FPT world:
\begin{itemize}
\item \textsc{Vertex Cover} asks for a set of $k$ vertices whose deletion removes every edge. This problem is equivalent
    to \dlhom where $H$ is a single vertex.
\item \textsc{Odd Cycle Transversal} (also known as \textsc{Vertex Bipartization}) asks for a set of at most $k$ vertices
    whose deletion makes the graph bipartite.  This problem can be expressed by \dlhom\ when $H$ consists of a single
    edge.
\item In \textsc{Vertex Multiway Cut} parameterized by the size of the cutset and the number of terminals, a graph $G$ is given with terminals $t_1,\dots,t_d$, and the task is to find a set of at most $k$ vertices whose deletion disconnects $t_i$ and $t_j$ for any $i \neq j$. This problem can be expressed as \dlhom when $H$ is a matching of $d$ edges, in the following way. Let us obtain $G'$ by subdividing each
    edge of $G$ (making it bipartite) and let the list of $t_i$ contain the vertices of the $i$-th edge $e_i$; all the other
    lists contain every vertex of $H$. It is easy to see that the deleted vertices must separate the terminals
    otherwise there is no homomorphism to $H$ and, conversely, if the terminals are separated from each other, then the
    component of $t_i$ has a list homomorphism to $e_i$.
\end{itemize}
Note that all three problems described above are NP-hard but known to be fixed-parameter
tractable~\cite{almost2sat-3,downey-fellows,DBLP:journals/tcs/Marx06,ReedSV04}.

\textbf{Our Results:}  Clearly, if \lhom is NP-complete, then \dlhom is NP-complete already for $k=0$, hence we cannot expect
it to be FPT. Therefore, by the results of Feder et al.~\cite{Feder/et_al:99:LHC}, \emph{we need to consider only the case when $H$ is a
bipartite graph} whose complement is a circular arc graph. We focus first on those graphs $H$ for which the characterization of
Egri et al.~\cite{Egri/et_al:2011:Complexity} showed that \lhom is not only polynomial-time solvable, but actually in logspace:
these are precisely those bipartite graphs that exclude the path $P_6$ on six vertices and the cycle $C_6$ on six vertices
as induced subgraphs. This class of bipartite graphs admits a decomposition using certain operations (see Section~\ref{sec:solving-fsfc} and \cite{Egri/et_al:2011:Complexity}), and to emphasize this decomposition, we also call this class of graphs {\em
skew decomposable graphs}. Note that the class of skew decomposable graphs is a strict subclass of chordal bipartite graphs ($P_6$ is chordal bipartite but not skew decomposable), and bipartite cographs and bipartite trivially perfect graphs are strict subclasses of skew decomposable graphs.

Our first result is that the \dlhom problem is fixed-parameter tractable for this class
of graphs.
\begin{restatable}{theorem}{mainthm}
\label{thm-main}
\dlhom is FPT parameterized by solution size and $|H|$, if $H$ is restricted to be skew decomposable.
\end{restatable}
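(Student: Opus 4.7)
The plan is to reduce \dlhom\ to progressively more structured subproblems and then induct on a decomposition of the target $H$.

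\textbf{Reduction chain.} Since $H$ is bipartite, any valid deletion set $W$ must be an odd cycle transversal of $G$, because $G\setminus W$ has to admit a homomorphism to a bipartite graph. My first move is therefore to combine the FPT algorithm for \textsc{Odd Cycle Transversal}~\cite{ReedSV04} (or an equivalent iterative compression / branching scheme) with the list-homomorphism constraint, so as to reduce to instances in which $G$ is bipartite with a fixed $2$-coloring. Next, for each connected component of $G$ I guess which of its two sides maps to each side of $H$'s bipartition, and which connected component of $H$ it targets. Components of $G$ that already admit a list homomorphism to some component of $H$ (a polynomial-time check by~\cite{Feder/et_al:99:LHC}) contribute zero to the deletion budget and can be ignored; every other component consumes at least one unit of~$k$, so at most $k$ active components survive, and on each the guess costs only $|H|^{O(1)}$. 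At the end of this chain we are facing the problem \dlhomfsfcc\ in which both $G$ and $H$ are connected, $H$ is skew decomposable, and the bipartition class of each vertex of $G$ is already prescribed.

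\textbf{Induction on the skew decomposition.} By the structural characterization of skew decomposable graphs from Egri et al.~\cite{Egri/et_al:2011:Complexity}, a connected skew decomposable $H$ either has bounded size (base case) or admits a ``skew split'' into two smaller skew decomposable subgraphs $H_1,H_2$ glued together via a structured biclique / anti-biclique interface between their bipartition classes. In the base case, with all sides already fixed, deciding the minimum number of deletions reduces to a local feasibility test per vertex and a constant number of local constraints per edge, and is solvable in polynomial time. In the recursive case I branch on whether each vertex of $G$ is deleted, is assigned to $V(H_1)$, or is assigned to $V(H_2)$; the skew structure ensures that once a bounded-size frontier of ``conflicting'' vertices has been resolved, the remaining instance cleanly splits into two independent instances of \dlhomfsfcc\ on the strictly smaller targets $H_1$ and $H_2$, with the combined deletion budget still bounded by $k$. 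Because each recursive call strictly decreases $|V(H)|$, the recursion depth is at most $|H|$ and the total cost fits the shape $f(k,|H|)\cdot n^{O(1)}$.

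\textbf{Main obstacle.} The delicate step is the branching at a single level of the skew decomposition: naively guessing the side of every vertex of $G$ is exponential in $|V(G)|$. The heart of the argument must isolate, from the geometry of the skew cut together with the current lists, a frontier of ``genuinely free'' vertices of $G$ of size bounded by a function of $k$ and $|H|$, and show that every other vertex is either forced to one side or forced to contribute one unit to the deletion budget (so that the branch charges $k$). I expect this to require arguments in the spirit of important separators or shadow removal, adapted to the biclique-like nature of a skew cut, and to use the polynomial-time algorithm for \lhom\ on the subtargets $H_1,H_2$ from~\cite{Egri/et_al:2011:Complexity} as a feasibility oracle at the leaves of the search tree. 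Composing the outer OCT / fixed-side / fixed-component reductions with this inner decomposition-driven recursion then yields the $f(k,|H|)\cdot n^{O(1)}$ bound claimed in Theorem~\ref{thm-main}.
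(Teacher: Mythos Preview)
Your reduction chain has a genuine gap at the ``fixed component'' step. You propose to guess, for each connected component of $G$, a single connected component of $H$ to which it will map, arguing that only $k$ ``active'' components of $G$ survive and that each guess costs $|H|^{O(1)}$. But this guess is not sound: the deletion set $W$ may split a single component of $G$ into several pieces, and in an optimal solution different pieces may map to \emph{different} components of $H$. A concrete counterexample is exactly the \textsc{Vertex Multiway Cut} instance described in the introduction: take $H$ to be a matching on $d$ edges and $G$ a subdivided star whose $d$ leaves have lists in distinct components of $H$. The optimal solution deletes the center (cost~$1$), after which each leaf lives in its own component and maps to its own edge of $H$. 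Under your scheme you would commit the whole star to one component of $H$, forcing $d-1$ leaf deletions. The same failure occurs, more subtly, for the ``fixed side'' guess: even when $G$ is bipartite and $H$ connected, pieces of $G\setminus W$ may take opposite parities. So the reduction to \dlhomfsfcc\ cannot be obtained by per-component guessing.

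This is precisely the difficulty the paper spends most of its effort on. After iterative compression one has a small set $N_0$ of vertices whose lists are already fixed-side, fixed-component; the rest are not. If two vertices of $N_0$ force incompatible components or parities, any solution must contain a separator between them, and the treewidth-reduction theorem of Marx, O'Sullivan, and Razgon together with Courcelle's theorem handles that case recursively. If there is no such conflict, every vertex has a \emph{candidate} side and component forced by $N_0$---but only if it remains connected to $N_0$ after deletion. The shadow-removal technique of Marx and Razgon is then used to kill the possibility that $W$ isolates vertices from $N_0$, and only after that does the problem legitimately become FS-FC($H$). Your ``Main obstacle'' paragraph anticipates that something like important separators or shadow removal will be needed, but you locate the difficulty inside the skew-decomposition recursion rather than in the reduction to the fixed-side fixed-component problem; in the paper the skew-decomposition step (your inductive part) is comparatively routine once FS-FC($H$) has been reached, while getting there is where the heavy tools are used.

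A smaller point: your base case (``$H$ of bounded size'') is \textsc{Vertex Cover}, which is not polynomial-time solvable; it is only FPT in $k$, so the base of the induction already needs the parameter.
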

Observe that the graphs considered in the examples above are all skew decomposable bipartite graphs, hence
Theorem~\ref{thm-main} is an algorithmic meta-theorem unifying the fixed-parameter tractability of \textsc{Vertex Cover}, \textsc{Odd Cycle
Transversal}, and \textsc{Vertex Multiway Cut} parameterized by the size of the cutset and the number of terminals, and various combinations of these problems.

Theorem~\ref{thm-main} shows that, for a particular class of graphs where \lhom is known to be polynomial-time solvable, the
deletion version \dlhom is fixed-parameter tractable. We conjecture that this holds in general: whenever \lhom is
polynomial-time solvable (i.e., the cases described by Feder et al.~\cite{Feder/et_al:99:LHC}), the corresponding \dlhom problem is
FPT.
\begin{myconjecture}\label{main_conj}
\label{conj:dichotomy} If $H$ is a fixed graph whose complement is a circular arc graph, then \dlhom is FPT parameterized by solution size.
\end{myconjecture}
It might seem unsubstantiated to conjecture fixed-parameter tractability for every bipartite graph $H$ whose complement is a
circular arc graph, but we show that, in a technical sense, proving Conjecture~\ref{conj:dichotomy} boils down to the
fixed-parameter tractability of a single fairly natural problem. We introduce a variant of maximum $\ell$-satisfiability, where the clauses
of the formula are implication chains\footnote{The notation $x_1\to x_2\to \dots \to x_\ell$ is a shorthand for $(x_1 \rightarrow x_2) \wedge (x_2 \rightarrow x_3) \wedge \cdots \wedge (x_{\ell-1} \rightarrow x_\ell)$.} $x_1\to x_2\to \dots \to x_\ell$ of length at most $\ell$, and the task is to make the
formula satisfiable by removing at most $k$ clauses; we call this problem {\em Clause Deletion $\ell$-Chain-SAT
($\ell$-CDCS)} (see Definition~\ref{def:CS-formula}). We conjecture that for every fixed $\ell$, this problem is FPT parameterized by $k$.
\begin{myconjecture}
\label{conj:cdcs} For every fixed $\ell\ge 1$, Clause Deletion $\ell$-Chain-SAT is FPT parameterized by solution size.
\end{myconjecture}
We show that for every bipartite graph $H$ whose complement is a circular arc graph, the problem \dlhom can be reduced to CDCS
for some $\ell$ depending only on $|H|$. Somewhat more surprisingly, we are also able to show a converse statement: for every
$\ell$, there is a bipartite graph $H_\ell$ whose complement is a circular arc graph such that $\ell$-CDCS can be reduced to
\dlhoma{H_\ell}. That is, the two conjectures are equivalent. Therefore, in order to settle
Conjecture~\ref{conj:dichotomy}, one necessarily needs to understand Conjecture~\ref{conj:cdcs} as well. Since the latter
conjecture considers only a single problem (as opposed to an infinite family of problems parameterized by $|H|$), it is likely
that connections with other satisfiability problems can be exploited, and therefore it seems that Conjecture~\ref{conj:cdcs}
is a more promising target for future work.
\begin{theorem}\label{thml:eqv}
Conjectures~\ref{conj:dichotomy} and \ref{conj:cdcs} are equivalent.
\end{theorem}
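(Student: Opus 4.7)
The plan is to prove the theorem by exhibiting two polynomial-time, parameter-preserving reductions, one in each direction, that pass FPT status between \dlhom and $\ell$-CDCS. Both reductions are hinted at in the paragraph preceding the theorem; my job is to make them precise and verify their parameters.

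\textbf{From Conjecture~\ref{conj:cdcs} to Conjecture~\ref{conj:dichotomy}.} Fix a bipartite graph $H$ whose complement is a circular arc graph. The approach is to leverage the structural characterization used in Feder et al.~\cite{Feder/et_al:99:LHC} (and refined in Egri et al.~\cite{Egri/et_al:2011:Complexity}) to show that the list-homomorphism constraints imposed by the edges of $G$ can be encoded by an implication system over Boolean ``threshold'' variables associated to linear orders arising from a circular-arc representation of $\overline{H}$. Concretely, each vertex $u$ of $G$ would contribute $O(|H|)$ threshold variables, and each edge of $G$ would impose implications between these thresholds. I would then show that the whole set of implications associated with a single vertex $u$ of $G$ can be grouped into $c = c(|H|)$ chains, each of length at most $\ell = \ell(|H|)$. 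Deleting $u$ from $G$ corresponds to deleting those $c$ chain-clauses, so an instance of \dlhom with parameter $k$ reduces to an instance of $\ell$-CDCS with parameter $ck$, which is FPT by Conjecture~\ref{conj:cdcs}.

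\textbf{From Conjecture~\ref{conj:dichotomy} to Conjecture~\ref{conj:cdcs}.} For each fixed $\ell$ I would exhibit a bipartite graph $H_\ell$ whose complement is a circular arc graph, together with a reduction from $\ell$-CDCS to \dlhoma{H_\ell}. The natural candidate is a ``staircase'' bipartite graph on vertex classes $\{a_0,\ldots,a_\ell\}$ and $\{b_0,\ldots,b_\ell\}$ with an edge $a_i b_j$ whenever $i\le j$ (or a closely related monotone interval pattern); one verifies directly that its complement is a circular arc graph. In such a target, a list homomorphism on a path-shaped gadget behaves like a monotone threshold, which is exactly the mechanism needed to encode implications $x_i\to x_{i+1}$. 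Given an $\ell$-CDCS formula $\varphi$, I would build a graph $G_\varphi$ that contains, for every chain-clause $C$, a gadget whose unique ``bad configuration'' can be repaired only by deleting a distinguished vertex $v_C$ representing $C$. The Boolean variables of $\varphi$ would be represented by shared variable-gadgets whose list homomorphism is forced to a consistent Boolean value that propagates to every clause-gadget they meet. A standard correctness argument then translates a size-$k$ deletion set in $G_\varphi$ into a set of at most $k$ clauses whose removal satisfies $\varphi$, and vice versa.

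\textbf{Main obstacle.} The principal difficulty lies in the first direction. While it is known that \lhom on these graphs reduces to an implication-type problem, it is \emph{not} obvious that the implications induced by a single vertex of $G$ organize themselves into a \emph{bounded} number of chains of \emph{bounded} length. Choosing the right representation — exploiting a suitable linear ordering of each side of $V(H)$ and the associated min-ordering / semilattice polymorphism structure — and verifying that each vertex deletion in $G$ translates precisely into a constant (in $|H|$) number of chain-clause deletions is the technical crux; without this, the reduction would only land inside a more general MaxSAT-like problem rather than $\ell$-CDCS. The second direction has a smaller but genuine subtlety: one must guarantee that the clause-gadgets cannot be ``broken'' cheaply by deleting vertices other than the intended $v_C$, so that the cheapest way to satisfy \dlhoma{H_\ell} really corresponds to removing chain-clauses one at a time. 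Once the correct $H_\ell$ and gadget design are in place, this reduces to a routine but careful case analysis.
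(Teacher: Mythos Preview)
Your high-level structure is right, but there is a genuine gap in the first direction (Conjecture~\ref{conj:cdcs} $\Rightarrow$ Conjecture~\ref{conj:dichotomy}), and you have misidentified where the real obstacle lies.

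The threshold encoding you sketch---associating to each vertex $u$ of $G$ a chain of Boolean variables indexed by the arcs in $L(u)$, with edge constraints becoming implications between thresholds---is essentially what the paper does. However, this encoding \emph{only works when the list $L(u)$ is contained in a single side of $H$}. If $L(u)$ meets both sides of the bipartition, the arcs in $L(u)$ straddle both $N$ and $S$ in the circular-arc representation, and there is no single linear order on $L(u)$ for which adjacency with a neighbour's list becomes a monotone threshold condition. The paper is explicit that the reduction to chain-SAT goes through \emph{only for fixed-side instances}. Your proposal jumps straight from general \dlhom to the encoding, which is exactly the step that fails.

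Getting from general \dlhom to a fixed-side instance is not cheap: the paper spends the whole of Section~\ref{sec:alg} on it (iterative compression, guessing the coloring on $W_0$, the conflict/treewidth-reduction/Courcelle branch, and the shadow-removal step to eliminate components isolated from $N_0$). All of that machinery is part of the proof of Theorem~\ref{thml:eqv}. Moreover, once lists \emph{are} fixed-side, your ``main obstacle'' evaporates: each vertex contributes exactly \emph{one} chain clause (not $c(|H|)$ of them), of length at most $|V(H)|+1$, so the parameter is preserved exactly rather than multiplied by $c$. The paper also introduces an intermediate \emph{variable-deletion} variant (VDCS) to align ``delete a vertex of $G$'' with ``delete a clause,'' and shows VDCS and CDCS are FPT-equivalent; your proposal conflates these two deletion models.

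For the second direction your plan is closer, but the simple staircase $H_\ell$ with edges $a_ib_j$ for $i\le j$ is not what the paper uses and is unlikely to suffice on its own: you also need to encode \emph{implications between variables in different chain clauses}, and doing this with lists of size~2 requires auxiliary arcs. The paper's $H_\ell$ has, in addition to the $\ell+1$ value arcs, a family of $6$ auxiliary arcs $u^1_{i,j},u^2_{i,j},v^1_{i,j},v^2_{i,j},w^1_{i,j},w^2_{i,j}$ for each pair $(i,j)$, used to build a length-$5$ path gadget that forces ``if $\alpha(C)$ lands at threshold $\le r$ then $\alpha(C')$ lands at threshold $\le r'$.'' Your idea of making only the clause-vertices deletable is correct and matches the paper (the inner path vertices are cloned $k+1$ times), but the gadget itself needs these extra arcs; the staircase alone does not give you an undeletable implication gadget.
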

Note that one may state Conjectures~\ref{conj:dichotomy} and
\ref{conj:cdcs} in a stronger form by claiming fixed-parameter
tractability with two parameters, considering $|H|$ and $\ell$ also as
a parameter (similarly to the statement of
Theorem~\ref{thm-main}). One can show that the equivalence of
Theorem~\ref{thml:eqv} remains true with this a version of the
conjectures as well. However, stating the conjectures with fixed $H$
and fixed $\ell$ gives somewhat simpler and more concrete problems to
work on.

\textbf{Our Techniques:}  For our fixed-parameter tractability results, we use a combination of several techniques (some of
them classical, some of them very recent) from the toolbox of parameterized complexity. Our first goal is to reduce \dlhom to
the special case where each list contains vertices only from one side of one component of the (bipartite) graph $H$; we call
this special case the ``fixed side, fixed component'' version. We note that the reduction to this special case is non-trivial: as the examples above illustrate, expressing \textsc{Odd Cycle Transversal} seems to require that the lists contain vertices from both sides of $H$, and expressing \textsc{Vertex Multiway Cut} seems to require that the lists contain vertices from more than one component of $H$.

We start our reduction by using the standard technique of iterative compression to obtain an instance where, besides a bounded
number of precolored vertices, the graph is bipartite.

We look for obvious conflicts in this instance. Roughly speaking, if there are two precolored vertices $u$ and $v$ in the same component of $G$ with colors $a$ and $b$, respectively, such that either (i) $a$ and $b$ are in different components of $H$, or (ii) $a$ and $b$ are in the same component of $H$ but the parity of the distance between $u$ and $v$ is different from the parity of the distance between $a$ and $b$, then the deletion set must contain a $u-v$ separator. We use the treewidth reduction technique of Marx et
al.~\cite{Marx/et_el:11:FSS} to find a bounded-treewidth region of the graph that contains all such separators. As we know
that this region contains at least one deleted vertex, every component outside this region can contain at most
$k-1$ deleted vertices. Thus we can recursively solve the problem for each such component, and collect all the information that
is necessary to solve the problem for the remaining bounded-treewidth region. We are able to encode our problem as a Monadic Second Order
(MSO) formula, hence we can apply Courcelle's Theorem \cite{Courcelle90} to solve the problem on the bounded-treewidth region.

Even if the instance has no obvious conflicts as described above, we might still need to delete certain vertices
due to more implicit conflicts. But now we know that for each vertex $v$, there is at most one component $C$ of $H$ and one side of $C$ that is consistent with the precolored vertices appearing in the component of $v$, that is, the precolored vertices force this side of $C$ on the vertex $v$. This seems to be close to our goal of being able
to fix a component $C$ of $H$ and a side of $C$ for each vertex. However, there is a subtle detail here: if the deleted set separates a vertex $v$ from every precolored vertex, then the precolored vertices do not force any restriction on $v$. Therefore, it seems
that at each vertex $v$, we have to be prepared for two possibilities: either $v$ is reachable from the precolored
vertices, or not. Unfortunately, this prevents us from assigning each vertex to one of the sides of a single component. We get around
this problem by invoking the ``randomized shadow removal'' technique introduced by Marx and Razgon~\cite{daniel-multicut-arxiv} (and subsequently
used in \cite{icalp-dsfvs,m2,multicut-dags,DBLP:journals/iandc/LokshtanovM13,icalp=parity-multiway}) to modify the instance in
such a way that we can assume that the deletion set does not separate any vertex from the precolored vertices, hence we can
fix the components and the sides.

Note that the above reductions work for any bipartite graph $H$, and the requirement that $H$ be skew decomposable is used
only at the last step: the structural properties of skew decomposable graphs \cite{Egri/et_al:2011:Complexity} allow us to solve the \emph{fixed side fixed component} version of the problem by a simple application of bounded-depth search.

If $H$ is a bipartite graph whose complement is a circular arc graph (recall that this class strictly contains all skew decomposable graphs), then we show how to formulate the problem as an instance of $\ell$-CDCS (showing that Conjecture~\ref{conj:cdcs}
implies Conjecture~\ref{conj:dichotomy}). Let us emphasize that our reduction to $\ell$-CDCS works only if the lists of the \dlhom instance have the
``fixed side'' property, and therefore our proof for the equivalence of the two conjectures (Theorem~\ref{thml:eqv}) utilizes the reduction machinery described above.

\section{Preliminaries}

Given a graph $G$, let $V(G)$ denote its vertices and
$E(G)$ denote its edges. If $G = (U,V,E)$ is bipartite, we call $U$ and $V$ the \emph{sides} of $H$. Let $G$ be a graph and $W \subseteq V(G)$. Then $G[W]$ denotes the subgraph of $G$ induced by the
vertices in $W$. To simplify notation, we often write $G \setminus W$ instead of $G[V(G) \setminus W]$. The set $N(W)$ denotes the
neighborhood of $W$ in $G$, that is, the vertices of $G$ which are not in $W$, but have a neighbor in $W$. Similarly to
\cite{Marx/et_el:11:FSS}, we define two notions of separation: between two sets of vertices and between a pair $(s,t)$ of vertices; note that in the latter case we assume that the separator is disjoint from $s$ and $t$.

\begin{definition}
A set $S$ of vertices {\em separates} the sets of vertices $A$ and $B$ if no component of $G \setminus S$ contains vertices from both $A
\setminus S$ and $B \setminus S$. If $s$ and $t$ are two distinct vertices of $G$, then an $s - t$ separator is a set $S$ of
vertices disjoint from $\{s,t\}$ such that $s$ and $t$ are in different components of $G \setminus S$.
\end{definition}

\begin{definition}
\label{defn:list-homomorphism} Let $G, H$ be graphs and $L$ be a \emph{list function} $V(G) \rightarrow 2^{V(H)}$. A
\textit{list homomorphism} $\phi$ from $(G,L)$ to $H$ (or if $L$ is clear from the context, from $G$ to $H$) is a homomorphism $\phi:G\rightarrow H$ such that $\phi(v)\in L(v)$ for
every $v \in V(G)$. In other words, each vertex $v \in V(G)$ has a list $L(v)$ specifying the possible images of $v$. The right-hand side graph $H$ is called the \emph{target} graph.
\end{definition}

When the target graph $H$ is fixed, we have the following problem:

\begin{center}
\noindent\framebox{\begin{minipage}{4.6in}
\textbf{\lhom}\\
\emph{Input }: A graph $G$ and a list function $L: V(G)\rightarrow 2^{V(H)}$.\\
\emph{Question} : Does there exist a list homomorphism from $(G,L)$ to $H$?
\end{minipage}}
\end{center}

The main problem we consider in this paper is the vertex deletion version of the \lhom problem, i.e., we ask if a set of vertices $W$
can be deleted from $G$ such that the remaining graph has a list homomorphism to $H$. Obviously, the list function is
restricted to ${V(G) \setminus W}$, and for ease of notation, we denote this restricted list function $L|_{V(G) \setminus W}$ by $L \setminus W$. We can now ask the following formal question:

\begin{center}
\noindent\framebox{\begin{minipage}{4.6in}
\textbf{\dlhom}\\
\emph{Input }: A graph $G$, a list function $L: V(G)\rightarrow 2^{V(H)}$, and an integer $k$.\\
\emph{Parameters }: $k$ , $|H|$\\
\emph{Question} : Does there exist a set $W \subseteq V(G)$ of size at most $k$ such that there is a list homomorphism from
$(G \setminus W, L\setminus W)$ to $H$?
\end{minipage}}
\end{center}
Notice that if $k=0$, then \dlhom becomes \lhom. The next section is devoted to prove our first result, Theorem~\ref{thm-main}.
\mainthm*
\section{The Algorithm}
\label{sec:alg}

The algorithm proving Theorem~\ref{thm-main} is constructed through a series of reductions which are outlined in
Figure~\ref{reduction_strategy}. Our starting point is the standard technique of iterative compression.

\begin{figure}[h!tb]
  \centering
  \includegraphics{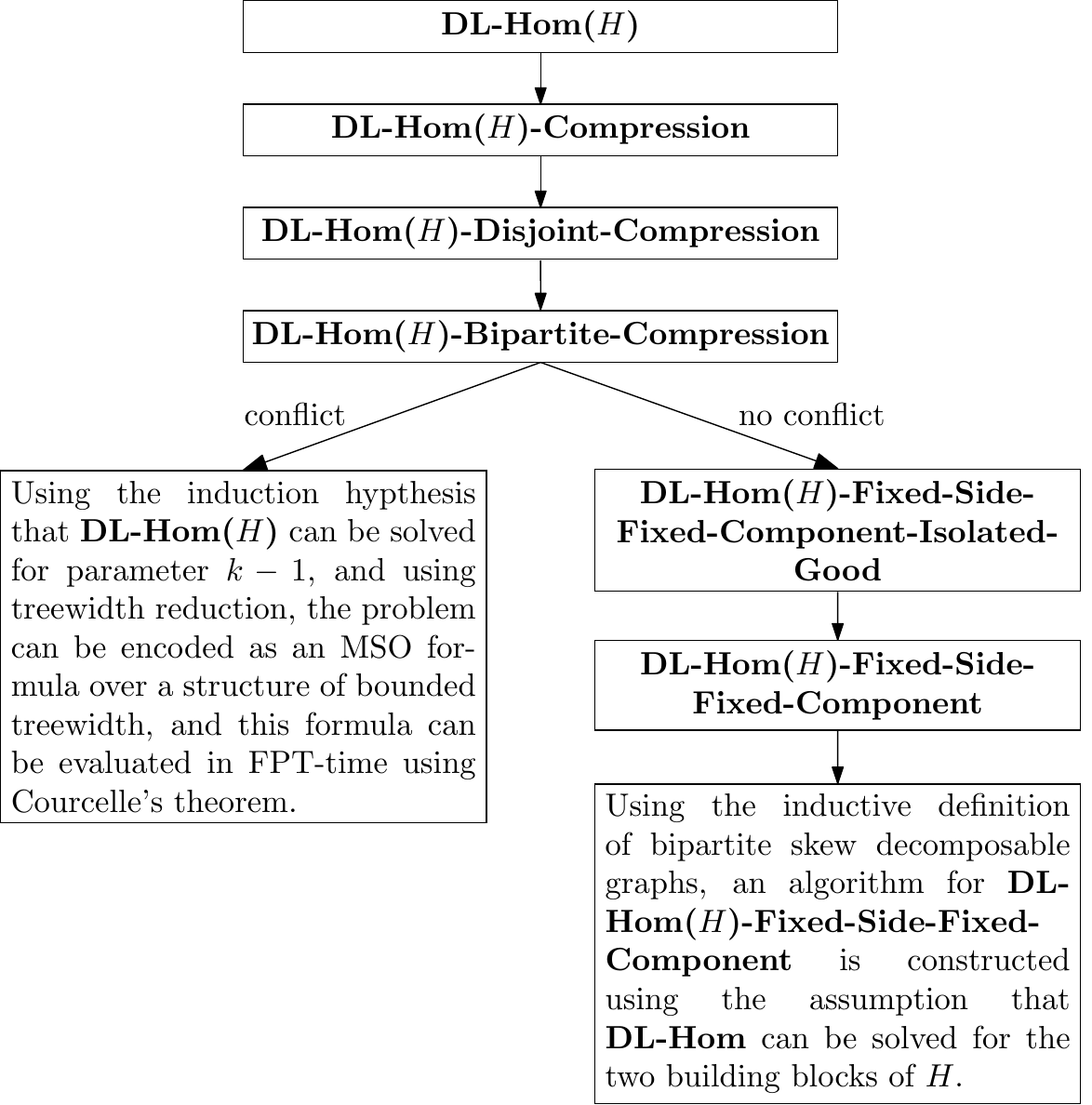}\\
  \caption{The structure of the reductions that establish the fixed-parameter tractability of \dlhom when $H$ is a skew-decomposable graph.}\label{reduction_strategy}
\end{figure}

\subsection{Iterative compression and making the instance bipartite}

We begin with applying the standard technique of \emph{iterative compression} \cite{ReedSV04}, that is, we transform the \dlhom problem into the following problem:

\begin{center}
\noindent\framebox{\begin{minipage}{4.6in}
\textbf{\dlhomcomp}\\
\emph{Input }: A graph $G_0$, a list function $L: V(G_0)\rightarrow 2^{V(H)}$, an integer $k$, and set
$W_0 \subseteq V(G_0)$, $|W_0| \leq k + 1$ such that $(G_0 \setminus W_0, L \setminus W_0)$ has a list homomorphism to $H$.\\
\emph{Parameter }: $k, |H|$\\
\emph{Question} : Does there exist a set $W \subseteq V(G_0)$ with $|W| \leq k$ such that $(G_0 \setminus W, L \setminus W)$ has a list
homomorphism to $H$?
\end{minipage}}
\end{center}

\begin{lemma}
(\textbf{power of iterative compression}) \dlhom can be solved by $O(n)$ calls to an algorithm for \dlhomcomp, where $n$ is
the number of vertices in the input graph. \label{lem:ic}
\end{lemma}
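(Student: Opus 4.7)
The plan is to use the textbook iterative compression paradigm of Reed, Smith, and Vetta. Fix an arbitrary ordering $v_1,\dots,v_n$ of $V(G)$ and let $G_i$ denote the induced subgraph $G[\{v_1,\dots,v_i\}]$, with $L$ restricted accordingly. I maintain, as an invariant after step $i$, a set $W_i\subseteq V(G_i)$ of size at most $k$ such that $(G_i\setminus W_i, L\setminus W_i)$ admits a list homomorphism to $H$; if at some step no such $W_i$ exists, I report that $(G,L,k)$ is a NO instance.

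The base case $i=0$ is trivial with $W_0=\emptyset$. For the inductive step, given $W_{i-1}$ of size at most $k$, I form $W'_i:=W_{i-1}\cup\{v_i\}$, which has size at most $k+1$. Since $G_i\setminus W'_i=G_{i-1}\setminus W_{i-1}$, this residual graph still has a list homomorphism to $H$ by the inductive hypothesis, so $(G_i, L|_{V(G_i)}, k, W'_i)$ is a valid instance of \dlhomcomp. I invoke the compression algorithm on it: if it returns a size-$k$ witness, I take it as $W_i$; if it returns NO, I halt and report NO for the original instance.

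The correctness of a NO verdict relies on the easy monotonicity observation that if $W$ is a solution for $(G,L,k)$, then $W\cap V(G_i)$ is a solution for $(G_i,L|_{V(G_i)},k)$, because the list homomorphism from $G\setminus W$ restricts to one on the subgraph induced by $V(G_i)\setminus W$. Hence, once the compression oracle certifies that $G_i$ has no size-$k$ solution, neither does $G$. The total number of oracle calls is exactly $n$, each on an instance of size at most $|V(G)|$, proving the claimed $O(n)$-call reduction.

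The only mild subtlety is whether the \dlhomcomp oracle is assumed to return a \emph{witness} deletion set or merely a yes/no bit; I will assume the former, which is the standard convention and matches how the compression algorithm is developed in later sections. If needed, the usual self-reduction trick --- try to delete vertices one by one from $W'_i$ and re-run the decision oracle --- recovers a witness with only polynomially many additional calls per iteration, without affecting the FPT running time. This is a formality rather than a genuine obstacle, and I do not anticipate any real difficulty in the proof beyond writing this invariant and bookkeeping carefully.
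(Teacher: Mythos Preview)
Your proposal is correct and follows essentially the same approach as the paper's proof: both use the standard Reed--Smith--Vetta iterative compression paradigm, processing vertices one at a time, maintaining a size-$k$ solution for the current prefix, and calling the compression oracle after enlarging the solution by the new vertex. Your treatment is slightly more explicit about the monotonicity justification for the NO verdict and about the witness-versus-decision issue, but these are cosmetic differences.
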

\begin{proof}
Assume that $V(G)=\{v_1,\ldots,v_n\}$ and for $i \in [n]$, let $V_i = \{v_1, \ldots v_i\}$. We construct a sequence of subsets
$X_1 \subseteq V_1, X_2 \subseteq V_2, \dots,X_n \subseteq V_n$ such that $X_i$ is a solution for the instance
$(G[V_i],L|_{V_i},k)$ of \dlhom. In general, we assume that vertices with empty lists are already removed and $k$ is modified
accordingly. Clearly, $X_1=\emptyset$ is a solution for $(G[V_1],L|_{V_1},k)$. Observe, that if $X_i$ is a solution for
$(G[V_i],L|_{V_i},k)$, then $X_i \cup \{v_{i+1}\}$ is a solution for $(G[V_{i+1}],L|_{V_{i+1}},k+1)$. Therefore, for each $i \in
[n-1]$, we set $W = X_i \cup \{v_{i+1}\}$ and use, as a blackbox, an algorithm for \dlhomcomp to construct a solution
$X_{i+1}$ for the instance $(G[V_{i+1}],L|_{V_{i+1}},k)$. Note that if there is no solution for $(G[V_i],L|_{V_i},k)$ for some
$i\in [n]$, then there is no solution for the whole graph $G$. Moreover, since $V_n = V(G)$, if all the calls to the
compression algorithm are successful, then $X_n$ is a solution for the graph $G$ of size at most $k$.
\qed \end{proof}

Now we modify the definition of \dlhomcomp so that it also requires that the solution set in the output be disjoint from the
solution set in the input, and we observe in Lemma~\ref{add_disjointness} below that this can be done without loss of generality.
\begin{center}
\noindent\framebox{\begin{minipage}{4.6in}
\textbf{\dlhomdisjcomp}\\
\emph{Input }: A graph $G_0$, a list function $L: V(G_0)\rightarrow 2^{V(H)}$, an integer $k$, and a
set $W_0 \subseteq V(G_0)$ of size at most $k+1$ such that $G_0 \setminus W_0$ has a list homomorphism to $H$.\\
\emph{Parameters }: $k$, $|H|$\\
\emph{Question} : Does there exist a set $W \subseteq V(G_0)$ disjoint from $W_0$ such that $|W|\leq k$ and $(G_0 \setminus W, L \setminus W)$ has
a list homomorphism to $H$?
\end{minipage}}
\end{center}

\begin{lemma}\label{add_disjointness}
(\textbf{adding disjointness}) \dlhomcomp can be solved by $O(2^{|W_0|})$ calls to an algorithm for the \dlhomdisjcomp
problem, where $W_0$ is the set given as part of the \dlhomcomp instance. \label{lem:disjoint}
\end{lemma}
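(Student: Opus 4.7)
The plan is to use the standard ``guess the intersection with $W_0$'' trick. Any candidate deletion set $W$ with $|W|\le k$ can be decomposed uniquely as $W = X \cup W'$ where $X := W \cap W_0$ and $W' := W\setminus W_0$. Since $|W_0|\le k+1$, there are at most $2^{|W_0|}$ possibilities for $X$, so I would iterate over all subsets $X\subseteq W_0$ and, for each guess, try to extend $X$ to a complete solution by a single call to the assumed \dlhomdisjcomp oracle.

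For a fixed guess $X\subseteq W_0$, I commit to deleting exactly the vertices of $X$ from $W_0$, so the remaining part of the solution must lie outside of $W_0 \setminus X$. Concretely, I build the \dlhomdisjcomp instance $(G_0',L',k',W_0')$ with $G_0' := G_0\setminus X$, $L' := L|_{V(G_0')}$, $k' := k-|X|$, and $W_0' := W_0\setminus X$, and hand it to the oracle; if the oracle returns a set $W'$, I output $W := X\cup W'$. The new instance is well-formed because $|W_0'| = |W_0|-|X| \le (k+1)-|X| = k'+1$, and $G_0'\setminus W_0' = G_0\setminus W_0$ still admits a list homomorphism to $H$ by hypothesis.

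Correctness in both directions is straightforward. If the oracle succeeds on some guess $X$ with output $W'\subseteq V(G_0)\setminus X$ disjoint from $W_0'$, then the two pieces of $W = X\cup W'$ are disjoint, $|W|\le |X|+k' = k$, and $(G_0\setminus W,L\setminus W)$ coincides with $(G_0'\setminus W', L'\setminus W')$, hence it has a list homomorphism to $H$. Conversely, any solution $W$ to the original \dlhomcomp instance decomposes as above, and $W\setminus W_0$ witnesses success of the oracle on the guess $X := W\cap W_0$. Since the $2^{|W_0|}$ enumeration is the only branching step, the total number of oracle calls is $O(2^{|W_0|})$.

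There is no substantive obstacle here; this is a textbook iterative-compression bookkeeping argument. The only point that deserves a moment's attention is verifying that the size invariant $|W_0'|\le k'+1$ required by \dlhomdisjcomp is preserved under the reduction, which is just the arithmetic above.
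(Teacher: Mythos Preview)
Your proof is correct and follows essentially the same ``guess the intersection $W\cap W_0$'' argument as the paper, with the same instance $(G_0\setminus X,\,L\setminus X,\,k-|X|,\,W_0\setminus X)$ passed to the disjoint oracle. If anything, you are slightly more careful than the paper in explicitly checking the size invariant $|W_0'|\le k'+1$ needed for the reduced instance to be well-formed.
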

\begin{proof}
Given an instance $(G,L,W_0,k)$ of \textsc{DL-Hom($H$)-Compression}, we guess the intersection $I$ of $W_0$ and the set $W$ to be
chosen for deletion in the output. We have at most $2^{|W_0|}$ choices for $I$. Then for each guess for $I$, we solve the
\textsc{DL-Hom($H$)-Disjoint-Compression} problem for the instance $(G \setminus I, k-|I|, W_0 \setminus I)$. It is easy to see
that there is a solution $W$ for the \dlhomcomp instance $(G,L,W_0,k)$ if and only if there is a guess $I$ such that
$W\setminus I$ is returned by an algorithm for \dlhomdisjcomp.
\qed \end{proof}

From Lemmas~\ref{lem:ic}~and~\ref{lem:disjoint} it follows that any FPT algorithm for
\textsc{DL-Hom$(H)$-Disjoint-Compression} translates into an FPT algorithm for \textsc{DL-Hom}$(H)$ with an additional blowup
factor of $O(2^{|W_0|}n)$ in the running time. Therefore, in the rest of the paper we will concentrate on giving an FPT algorithm for the
\textsc{DL-Hom$(H)$-Disjoint-Compression} problem.

\label{sec:making-instance-bipartite} Since the new solution $W$ can be assumed to be disjoint from $W_0$, we must have a partial homomorphism from $(G_0[W_0], L|_{W_0})$ to $H$. We guess all such partial list homomorphisms $\gamma$
from$(G_0[W_0], L_{W_0})$ to $H$, and we hope that we can find a set $W$ such that $\gamma$ can be extended to a total list homomorphism
from $(G_0 \setminus W, L \setminus W)$ to $H$. To guess these partial homomorphisms, we simply enumerate all possible mappings from $W_0$ to $H$ and check whether the given mapping is a list homomorphism from $(G_0[W_0], L|_{W_0})$ to $H$. If not we discard the given
mapping. Observe that we need to consider only $|V(H)|^{|W_0|} \leq |V(H)|^{k+1}$ mappings. Hence, in what follows we can
assume that we are given a partial list homomorphism $\gamma$ from $G_0[W_0]$ to $H$.

We propagate the consequences of $\gamma$ to the lists of the vertices in the neighborhood of $W_0$, as follows. Consider a vertex $v \in W_0$. For each neighbor $u$ of $v$ in $N(W_0)$, we trim $L(u)$ as $L(u) \leftarrow L(u) \cap N(\gamma(v))$. Since $H$ is bipartite, the list of each vertex in $N(W_0)$ is now a subset of one of the sides of a single connected component of $H$. We say that such a list is \emph{fixed side} and \emph{fixed component}. Note that while doing this, some of the lists might become empty. We delete those vertices from the graph, and reduce the parameter accordingly.

Recall that $G_0 \setminus W_0$ has a list homomorphism $\phi$ to the bipartite graph $H$, and therefore $G_0
\setminus W_0$ must be bipartite. We will mostly need only the restriction of the homomorphism $\phi$ to $G_0 \setminus (W_0 \cup N(W_0))$, hence we denote
this restriction by $\phi_0$. To summarize the properties of the problem we have at hand, we define it formally below. Note that we do not need the graph $G_0$ and the set $W_0$ any more, only the graph $G_0 \setminus W_0$, and the neighborhood $N(W_0)$. To simplify notation, we refer to $G_0 \setminus W_0$ and $N(W_0)$ as $G$ and $N_0$, respectively.
\begin{center}
\noindent\framebox{\begin{minipage}{4.6in}
\textbf{\dlhombipcomp (\dlhombipcomps)}\\
\emph{Input }: A bipartite graph $G$, a list function $L: V(G)\rightarrow 2^{V(H)}$, a set $N_0 \subseteq V(G)$, where for each $v \in N_0$, the list $L(v)$ is fixed side and fixed component, a list homomorphism $\phi_0$ from $(G \setminus N_0, L \setminus N_0)$ to $H$, and an integer $k$.\\
\emph{Parameters }: $k$, $|H|$\\
\emph{Question} : Does there exist a set $W \subseteq V(G)$, such that $|W| \leq k$ and $(G \setminus W, L \setminus W)$ has a list homomorphism to $H$?
\end{minipage}}
\end{center}

\subsection{The case when there is a conflict}\label{conflict_section}

We define two types of \emph{conflicts} between the vertices of $N_0$. Recall that the lists of the vertices in $N_0$ in a \dlhombipcomps instance are fixed side fixed component.
\begin{definition} \label{defn:conflict}
Let $(G, L, N_0, \phi_0, k)$ be an instance of \dlhombipcomps. Let $u$ and $v$ be vertices in the same component of $G$.
We say that $u$ and $v$ are in \emph{component conflict} if $L(u)$ and $L(v)$ are subsets of vertices of different components
of $H$. Furthermore, $u$ and $v$ are in \emph{parity conflict} if $u$ and $v$ are not in component conflict, and either $u$ and
$v$ belong to the same side of $G$ but $L(u)$ is a subset of one of the sides of a component of $C$ of $H$ and $L(v)$ is a subset of the other
side of $C$, or $u$ and $v$ belong to different sides of $G$ but $L(u)$ and $L(v)$ are subsets of the same side of a component of $H$.
\end{definition}
In this section, we handle the case when such a conflict exists, and the other case is handled in Section~\ref{no_conflict_section}.

If a conflict exists, its presence allows us to invoke the treewidth reduction technique of Marx et al.~\cite{Marx/et_el:11:FSS} to split the instance into a bounded-treewidth part, and into instances having parameter value strictly less than $k$. After
solving these instances with smaller parameter value recursively, we encode the problem in Monadic Second Order logic, and
apply Courcelle's theorem~\cite{Courcelle90}.

The following lemma easily follows from the definitions.
\begin{lemma}\label{lem:conflict-must-be-removed}
Let $(G, L, N_0, \phi_0,k)$ be an instance of \dlhombipcomps. If $u$ and $v$ are any two vertices in $N_0$ that are in
component or parity conflict, then any solution $W$ must contain a set $S$ that separates the sets $\{u\}$ and $\{v\}$.
\end{lemma}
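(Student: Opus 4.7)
The plan is to prove the contrapositive: if $W$ fails to separate $\{u\}$ and $\{v\}$, then there cannot be a list homomorphism from $(G \setminus W, L\setminus W)$ to $H$, contradicting that $W$ is a solution. Suppose $W \subseteq V(G)$ is such that some component of $G\setminus W$ contains both $u$ and $v$ (in particular $u,v\notin W$). Let $\psi$ be any list homomorphism from $(G\setminus W, L\setminus W)$ to $H$; then $\psi(u)\in L(u)$ and $\psi(v)\in L(v)$, and any $u$-$v$ path $P$ in $G\setminus W$ is mapped by $\psi$ to a walk from $\psi(u)$ to $\psi(v)$ in $H$ of the same length as $P$.

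If $u$ and $v$ are in component conflict, then $L(u)$ and $L(v)$ lie in different connected components of $H$, so $\psi(u)$ and $\psi(v)$ lie in different components of $H$, making a walk between them impossible. Hence $\psi$ cannot exist, and some set $S\subseteq W$ must separate $\{u\}$ and $\{v\}$.

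If $u$ and $v$ are in parity conflict, I will use bipartiteness of both $G$ and $H$. Since $G$ is bipartite and $G\setminus W \subseteq G$, the parity of the length of every $u$-$v$ walk in $G\setminus W$ matches the side relation of $u$ and $v$ in $G$: even if they lie on the same side, odd if they lie on opposite sides. The image walk in $H$ has the same length, so $\psi(u)$ and $\psi(v)$ lie on the same side of their (common) component of $H$ iff $u$ and $v$ lie on the same side of $G$. But parity conflict is defined precisely as the opposite situation: either $u,v$ are on the same side of $G$ while $L(u),L(v)$ are on opposite sides of the component of $H$, or $u,v$ are on opposite sides of $G$ while $L(u),L(v)$ are on the same side. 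In either case the required walk length parity contradicts $\psi(u)\in L(u)$, $\psi(v)\in L(v)$, so $\psi$ again cannot exist.

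Thus in both cases, if $W$ does not contain a separator between $\{u\}$ and $\{v\}$, then $(G\setminus W, L\setminus W)$ admits no list homomorphism to $H$, and $W$ is not a valid solution. This yields the lemma. There is no real obstacle; the only thing to be careful about is the precise definition of ``separating the sets $\{u\}$ and $\{v\}$'' (as opposed to a $u$-$v$ separator), which permits $u$ or $v$ themselves to lie in $S$, so no special case needs to be excluded when $u$ or $v$ happens to be in $W$.
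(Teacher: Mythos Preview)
Your proof is correct and is precisely the natural argument the paper has in mind; the paper itself does not spell out a proof, stating only that the lemma ``easily follows from the definitions.'' Your careful handling of the set-separation definition (allowing $u$ or $v$ to lie in $S$) and the two cases (component vs.\ parity conflict via walk length parity in the bipartite graphs $G$ and $H$) is exactly what is needed.
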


Before we can prove the main lemma of this section (Lemma~\ref{conflict_main_lemma}), first we need the definitions of tree decomposition and treewidth.

\begin{definition}
A \emph{tree decomposition} of a graph $G$ is a pair $(T,\cal{B})$ in which $T$ is
a tree and $\mathcal{B} = \{B_i \;|\; i \in V(T)\}$ is a family of subsets of $V(G)$ such that
\begin{enumerate}
    \item $\bigcup_{i \in V(T)} B_i = V(G)$;
    \item For each $e\in E(G)$, there exists an $i \in V(T)$ such that
        $e \subseteq B_i$;
    \item For every $v \in V(G)$, the set of nodes $\{i \in I \;|\; v \in B_i \}$ forms a connected subtree of $T$.
\end{enumerate}
The \emph{width} of a tree decomposition is the number $\max\{|B_t|-1 \;\mid\; t \in V(T)\}$. The \emph{treewidth} $tw(G)$ is the minimum of the widths of the tree decompositions of $G$.
\end{definition}
It is well known that the maximum clique size of a graph is at most
its treewidth plus one.

A \emph{vocabulary} $\tau$ is a finite set of relation symbols or
predicates. Every relation symbol $R$ in $\tau$ has an arity
associated to it. A relational structure $\bA$ over a vocabulary
$\tau$ consists of a set $A$, called the domain of $\bA$, and a relation
$R^{\bA} \subseteq A^r$ for each $R \in \tau$, where $r$ is the arity
of $R$.

\begin{definition}\label{def:Gaifman_graph}
  The Gaifman graph of a $\tau$-structure $\bA$ is the graph $G_\bA$
  such that $V(G_\bA) = A$ and $\{a,b\}$ ($a \neq b$) is an edge of
  $G_\bA$ if there exists an $R \in \tau$ and a tuple $(a_1,\dots,a_r)
  \in R^\bA$ such that $a,b \in \{a_1,\dots,a_r\}$, where $r$ is the
  arity of $R$. The treewidth of $\bA$ is defined as the treewidth of
  the Gaifman graph of $\bA$.
\end{definition}

The result we need from \cite{Marx/et_el:11:FSS} states that all the
minimal $s-t$ separators of size at most $k$ in $G$ can be covered by a set
$C$ inducing a bounded-treewidth subgraph of $G$. In fact, a stronger
statement is true: this subgraph has bounded treewidth even if we
introduce additional edges in order to take into account connectivity
outside $C$. This is expressed by the operation of taking the torso:
\begin{definition}
Let $G$ be a graph and $C \subseteq V(G)$. The graph $\torso(G,C)$ has vertex set $C$ and two vertices $a,b \in C$ are
adjacent if $\{a,b\} \in E(G)$ or there is a path in $G$ connecting $a$ and $b$ whose internal vertices are not in $C$.
\end{definition}
Observe that by definition, $G[C]$ is a subgraph of $\torso(G,C)$.
\begin{lemma}[\cite{Marx/et_el:11:FSS}]\label{sep_treewidth}
Let $s$ and $t$ be two vertices of $G$. For some $k \geq 0$, let $C_k$ be the union of all minimal sets of size at most $k$ that are $s-t$ separators. There is a $O(g_1(k) \cdot (|E(G) + V(G)|))$ time algorithm that returns a set $C \supset C_k\cup\{s,t\}$ such that
the treewidth of $\torso(G,C)$ is at most $g_2(k)$, for some functions $g_1$ and $g_2$ of $k$.
\end{lemma}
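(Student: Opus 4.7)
The plan is to prove Lemma~\ref{sep_treewidth} via the \emph{important separator} machinery, which is the standard tool for capturing all small $s$-$t$ separators inside a bounded-treewidth region. Recall that an $s$-$t$ separator $S$ is called \emph{important} if it is inclusion-minimal and there is no $s$-$t$ separator $S'$ with $|S'| \le |S|$ and $R_s(S) \subsetneq R_s(S')$, where $R_s(S)$ denotes the vertex set of the component of $G \setminus S$ containing $s$. A branching argument (branch on a vertex of a minimum $s$-$t$ cut: either include it in the separator or push it to the $s$-side) shows that there are at most $4^k$ important $s$-$t$ separators of size at most $k$, and all of them can be enumerated in time $f(k)\cdot(|V(G)|+|E(G)|)$ using $O(4^k)$ max-flow computations.

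With this in hand, I would build $C$ iteratively. Initialize $C := \{s,t\}$, enumerate all important $s$-$t$ separators of size at most $k$, and add their vertices to $C$. The key combinatorial claim to verify is that any minimal $s$-$t$ separator of size at most $k$ is ``captured'' by this process: either it contains an important separator already put into $C$, or it lies entirely inside a connected region of $G \setminus C$ strictly between two consecutive important separators, in which case I recurse on that region using the bounding important separators as new sources and sinks (and a slightly reduced budget). Because each recursive call either captures a separator outright or strictly reduces the remaining slack in the budget, the recursion has depth $O(k)$, yielding $|C| \le g(k)$ for some function $g$ and giving the stated running time.

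The main obstacle will be bounding the treewidth of $\torso(G, C)$. I would extract a tree decomposition directly from the recursion tree $T$: each node of $T$ corresponds to a subinstance, and its bag consists of the $O(4^k)$ important separators selected at that step together with the pair of separators it inherits from its parent. Verifying the three tree-decomposition axioms for $G[C]$ is direct from the construction; the subtle point is that $\torso(G,C)$ has additional shortcut edges between any two vertices of $C$ joined by a path through $V(G) \setminus C$. To handle these, one uses that each connected component of $G \setminus C$ is processed in a single subtree of $T$, so its entire neighborhood in $C$ lies in the two important separators bounding the region, both of which appear together in one bag. This is the technical heart of the argument in \cite{Marx/et_el:11:FSS} and is where the detailed verification would concentrate; bounding the bag sizes by $O(k \cdot 4^k)$ then yields the desired treewidth bound $g_2(k)$.
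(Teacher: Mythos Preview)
This lemma is not proved in the paper; it is quoted from \cite{Marx/et_el:11:FSS} and used as a black box, so there is no in-paper argument to compare against.

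Evaluated on its own terms, your sketch has a genuine gap: the assertion that the recursion ``yields $|C|\le g(k)$'' is false and cannot be salvaged. The lemma requires $C\supseteq C_k$, and $C_k$ can have $\Theta(|V(G)|)$ vertices. On the path $s\,v_1\,v_2\cdots v_n\,t$ with $k=1$, every singleton $\{v_i\}$ is a minimal $s$--$t$ separator, so $|C_k|=n$; the lemma still holds because $\torso(G,C)$ is itself a path of treewidth $1$, but any argument bounding $|C|$ by a function of $k$ alone is doomed. The same example refutes the claim that each recursive call ``strictly reduces the remaining slack in the budget'': with $k=1$ the unique important $s$--$t$ separator is $\{v_n\}$, and the residual region between $s$ and $v_n$ is again a path with minimum cut $1$ and budget $1$, so the slack $k-\lambda=0$ never drops and the recursion runs for $\Theta(n)$ rounds rather than $O(k)$. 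What is missing is (i) a separate base case for slack zero, where all minimum $s$--$t$ cuts are taken at once and their non-crossing structure directly gives a path-shaped decomposition of width $O(k)$, and (ii) the use of important separators in \emph{both} directions (from $s$ and from $t$), so that each recursive cell is sandwiched between two separators forcing the minimum cut inside the cell to be strictly larger than in the parent. Both ingredients appear in \cite{Marx/et_el:11:FSS}; without them your scheme neither terminates in bounded depth nor produces a set $C$ covering all of $C_k$.
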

Lemma~\ref{lem:conflict-must-be-removed} gives us a pair of vertices that must be separated. Lemma~\ref{sep_treewidth} specifies a bounded-treewidth region $C$ of the input graph which must contain at least one vertex of the above separator, that is, we know that at least one vertex must be deleted in this bounded-treewidth region.

Courcelle's Theorem gives an easy way of showing that certain problems
are linear-time solvable on bounded-treewidth graphs: it states that
if a problem can be formulated in MSO, then there is a linear-time
algorithm for it. This theorem also holds for relational structures of bounded-treewidth instead of just graphs, a generalization we need because we introduce new relations to encode the properties of the components of $G\setminus C$.
\begin{theorem}(Courcelle's Theorem, see e.g.\ \cite{flum-grohe})\label{thm:Courcelle}
The following problem is fixed parameter tractable:
\begin{center}
\noindent\framebox{\begin{minipage}{4.6in}
$\mathbf{p^*-tw-\textsc{MC(MSO)}}$\\
\emph{\emph{Input }: A structure $\bA$ and an MSO-sentence $\varphi$;}\\
\emph{\emph{Parameter }: $tw(\bA) + |\varphi|$;}\\
\emph{\emph{Problem} : Decide whether $\bA \models \varphi$}.
\end{minipage}}
\end{center}
Moreover, there is a computable function $f$ and an algorithm that solves it in time $f(k,\ell) \cdot |A| + O(|\bA|)$, where
$k = tw(\bA)$ and $\ell = |\varphi|$.
\end{theorem}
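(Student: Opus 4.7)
The plan is to reduce MSO model-checking on a bounded-treewidth structure to the acceptance problem for a finite bottom-up tree automaton run on a labeled tree that encodes a tree decomposition of $\bA$. This is the classical route due to Doner, Thatcher--Wright, and Courcelle, and it rests on the equivalence between MSO-definable languages of labeled trees and regular tree languages.

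First, I would invoke Bodlaender's linear-time algorithm on the Gaifman graph $G_\bA$ to either produce a tree decomposition $(T,\cB)$ of width at most $w=tw(\bA)$ or conclude that none exists; this step contributes the additive $O(|\bA|)$ term in the running time after conversion to a nice tree decomposition with $O(|A|)$ nodes of constant degree. I would then turn $(T,\cB)$ into a labeled binary tree over a finite alphabet $\Sigma_{w,\tau}$ whose symbols encode the bag type (leaf/introduce/forget/join), the identification of elements between parent and child bag, and which tuples of each relation $R^\bA$ are realized inside the current bag. Since any tuple of $R^\bA$ induces a clique in $G_\bA$ and hence lies within some bag, the original structure $\bA$ is faithfully recoverable from this labeled tree.

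The core of the proof is the inductive translation of $\varphi$ into a bottom-up tree automaton $\cM_\varphi$ over $\Sigma_{w,\tau}$ that accepts exactly those labeled trees whose encoded structure satisfies $\varphi$. Atomic formulas become simple automata that inspect bag contents; conjunction and disjunction are handled by product constructions; negation requires determinization followed by complementation; and first- or second-order quantifiers are handled by enlarging the alphabet so that each bag guesses the local trace of the interpretation of the quantified variable (or set), followed by projection, which again requires determinization. Once $\cM_\varphi$ is available, a single pass on the encoded tree runs in time linear in $|V(T)|=O(|A|)$, so the overall running time is $f(k,\ell)\cdot|A|+O(|\bA|)$, with $f$ absorbing the decomposition step and the automaton construction.

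The main obstacle is controlling $|\cM_\varphi|$ through the induction on the structure of $\varphi$: each quantifier alternation forces a determinization step and hence an exponential blow-up of the state space, so $|\cM_\varphi|$ is non-elementary in $|\varphi|$ and $f(k,\ell)$ becomes a tower of exponentials in $\ell$. From the FPT viewpoint this is acceptable because $\ell=|\varphi|$ is a parameter, but verifying that the interplay between alphabet-guessing at bags and projection correctly captures MSO quantification over vertex and edge sets of the original structure (rather than merely over positions in the tree decomposition) is the delicate bookkeeping that has to be carried out with care, in particular because an element of $A$ may appear in many bags and its guessed trace must be forced to be consistent across the subtree in which it occurs.
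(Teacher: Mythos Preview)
The paper does not prove this statement; it merely cites it as a known result from the literature (specifically, the Flum--Grohe reference). There is therefore no ``paper's own proof'' to compare against. Your sketch is the standard tree-automata route to Courcelle's Theorem and is correct in outline, so if an actual proof were required it would be an acceptable one; but for the purposes of this paper the theorem is simply quoted as a black box and invoked in Lemma~\ref{conflict_main_lemma}.
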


The following lemma formalizes the above ideas.
\begin{lemma}\label{conflict_main_lemma}
Let $\mathcal{A}$ be an algorithm that correctly solves \dlhom for input instances in which the first parameter is at most $k-1$. Suppose that the running time of $\mathcal{A}$ is $f(k-1,H) \cdot x^c$, where $x$ is the size of the input, and $c$ is a sufficiently large constant. Let $I$ be an instance of \dlhombipcomps with parameter $k$ that contains a component or parity conflict. Then $I$ can be solved in time $f(k,H) \cdot x^c$ (where $f$ is defined in the proof).
\end{lemma}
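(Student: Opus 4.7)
The plan is to exploit the forced separation to split the problem into a bounded-treewidth core plus subproblems with budget $k-1$, and then glue the answers via Courcelle's theorem. First, I pick any pair $u,v\in N_0$ in a component or parity conflict. By Lemma~\ref{lem:conflict-must-be-removed}, every solution $W$ contains a $u$-$v$ separator, in particular a minimal one of size at most $k$. Applying Lemma~\ref{sep_treewidth} produces a set $C\supseteq\{u,v\}$ covering every such minimal separator with $\torso(G,C)$ of treewidth at most $g_2(k)$.

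Since $W\cap C\neq\emptyset$, each connected component $D$ of $G\setminus C$ may be deleted in at most $k-1$ vertices. The key combinatorial point is that $B_D:=N(D)\cap C$ is a clique in $\torso(G,C)$, so $|B_D|\le g_2(k)+1$. I enumerate every partial list homomorphism $\alpha:B_D\to V(H)$ that respects $L$ and $G[B_D]$ (at most $|V(H)|^{g_2(k)+1}$ choices). For each $\alpha$ and each budget $w\in\{0,\dots,k-1\}$, I propagate $\alpha$ to the lists on $D$ by replacing $L(v)$ with $L(v)\cap\bigcap_{u\in B_D,\{u,v\}\in E(G)}N_H(\alpha(u))$ and call $\mathcal{A}$ on the resulting \dlhom instance on $D$ with budget $w$. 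This produces, for every $D$, a function $\rho_D(\alpha)\in\{0,\dots,k-1\}\cup\{\infty\}$ recording the cheapest deletion cost inside $D$ compatible with $\alpha$.

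What remains is to choose a deletion set $W'\subseteq C$ of some size $k'$ and assignments $\alpha_D$ at each component so that $k'+\sum_D\rho_D(\alpha_D)\le k$ and the colorings glue consistently on $\torso(G,C)$. I encode this as an MSO-sentence over a relational structure whose universe is $C$, equipped with the $G[C]$-edge relation and, for each component $D$, a relation $R_D$ listing the tuples $(v_1,\dots,v_{|B_D|},w)$ that realise $\rho_D$-value $w$. Because each $R_D$ is supported on a clique of $\torso(G,C)$, the Gaifman graph of this structure is a subgraph of $\torso(G,C)$ and still has treewidth at most $g_2(k)$. The sentence existentially guesses $W'$, a coloring of $C\setminus W'$ respecting $L$ and $G[C]$, and for each $R_D$ a tuple compatible with this coloring on $B_D$; the global budget inequality is enforceable because at most $k$ components contribute nonzero cost, so one can pre-branch over the bounded number of cost profiles and require the rest to have cost zero. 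Courcelle's theorem (Theorem~\ref{thm:Courcelle}) then solves the problem in time $f'(k,|H|)\cdot x$.

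The step I expect to be the main obstacle is the MSO encoding of the cost inequality $k'+\sum_D\rho_D(\alpha_D)\le k$, since MSO lacks native arithmetic; the remedy above works because $k$ is the parameter and therefore only finitely many cost profiles need to be considered. Summing the running times, the recursive calls to $\mathcal{A}$ contribute at most $|V(H)|^{g_2(k)+1}\cdot k\cdot f(k-1,|H|)\cdot x^c$ in total (using $\sum_D|D|^c\le x^c$ for $c\ge 1$), plus $f'(k,|H|)\cdot x$ for the Courcelle step; defining $f(k,|H|)$ recursively from $f(k-1,|H|)$, $g_2(k)$, $|V(H)|$, and $f'(k,|H|)$ yields the claimed bound $f(k,|H|)\cdot x^c$.
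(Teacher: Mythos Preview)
Your high-level plan---treewidth reduction, recursion on the components of $G\setminus C$ with budget $k-1$, then an MSO model-checking step on a structure whose Gaifman graph sits inside $\torso(G,C)$---is exactly the paper's approach. The recursion and running-time bookkeeping are fine. The gap is in the MSO encoding.

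You introduce a separate relation $R_D$ for each connected component $D$ of $G\setminus C$, and the sentence then must reference each $R_D$ (``for each $R_D$ a tuple compatible with this coloring on $B_D$''). But the number of such components is not bounded by any function of $k$ and $|H|$; it can be polynomial in $|V(G)|$. Consequently both the vocabulary size and $|\varphi|$ are unbounded in the parameters, and Theorem~\ref{thm:Courcelle} does not give the FPT bound you need (its running time depends on $|\varphi|$). Your pre-branching over ``cost profiles'' does not rescue this: even after fixing which nonzero costs occur, the sentence must still assert that \emph{every other} component has cost~$0$, and that universal statement over components again forces you to name all the $R_D$. A secondary issue is that the tuple $(v_1,\dots,v_{|B_D|},w)$ mixes vertices with a number $w$, which is not an element of the universe $C$.

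The paper's fix is to index the relations by the data that \emph{is} bounded: arity $q\le g_2(k)+1$, a coloring pattern $(c_1,\dots,c_q)\in\{0,1,\dots,h\}^q$ of the boundary (with $0$ meaning ``deleted''), and a budget $d\le k$. All components whose neighbourhood in $C$ is exactly $\{v_1,\dots,v_q\}$ are bundled into a single ``common-neighbourhood component'', and the tuple $(v_1,\dots,v_q)$ is placed into $R_{(c_1,\dots,c_q),d}$ iff $d$ deletions suffice for that bundle under the coloring $(c_1,\dots,c_q)$. Now the number of relation symbols, and hence $|\varphi|$, is bounded by a function of $k$ and $|H|$. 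The cost inequality is then expressed by \emph{negation}: $\varphi_{\bar C,j}$ says there is no collection of at most $j{+}1$ boundary-tuples witnessing that more than $j$ deletions are needed, so only boundedly many existential vertex-variables are required. If you rework your encoding along these lines, the rest of your argument goes through.
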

\begin{proof}
Let $I = (G, L, N_0, \phi_0, k)$ be an instance of \dlhombipcomps. Let $v,w \in N_0$ such that $v$
and $w$ are in component or parity conflict. Then by
  Lemma~\ref{lem:conflict-must-be-removed}, the deletion set must
  contain a $v-w$ separator. Using Lemma~\ref{sep_treewidth}, we can
  find a set $C$ with the properties stated in the lemma (and note
  that we will also make use of the functions $g_1$ and $g_2$ in the
  statement of the lemma). Most importantly, $C$ contains at least one
  vertex that must be removed in any solution, so the maximum number
  of vertices that can be removed from any connected component of
  $G[V(G) \setminus C]$ without exceeding the budget $k$ is at most
  $k-1$. Therefore, the outline of our strategy is the following. We
  use $\cA$ to solve the problem for some slightly modified versions
  of the components of $G[V(G) \setminus C]$, and using these
  solutions, we construct an MSO formula that encodes our original
  problem $I$.  Furthermore, the relational structure over which this
  MSO formula must be evaluated has bounded treewidth, and therefore
  the formula can be evaluated in linear time using
  Theorem~\ref{thm:Courcelle}.

Assume without loss of generality that $V(H) = \{1,\dots,h\}$. The MSO formula has the form
\begin{multline*}
\exists K_0, \dots, K_h \bigg[ \varphi_{part}(K_0, \dots, K_h) \wedge \varphi_{C}(K_0, \dots, K_h) \wedge \\
  \bigvee_{i=0}^{k} \left( \varphi_{|K_0| \leq i}(K_0) \wedge
    \varphi_{\bar{C},k-i}(K_0, \dots, K_h)\right)\bigg].
\end{multline*}
The set $K_0$ represents the deletion set that is removed from $G[C]$, and
$K_1,\dots,K_h$ specifies the colors of the vertices in the subgraph
$G[C \setminus K_0]$. The sub-formula $\varphi_{part}(K_0, \dots,
K_h)$ checks if $K_0, \dots, K_h$ is a valid partition of $C$, and
$\varphi_{C}$ checks if $K_1,\dots,K_h$ is an $H$-coloring of $G[C
\setminus K_0]$.  The third subformula checks whether there is an
additional set $L \subseteq V(G) \setminus C$ such that $|K_0| + |L|
\leq k$, and the coloring $K_1,\dots,K_h$ of $G[C\setminus K_0]$ can be extended
to $G[V(G) \setminus (K_0 \cup L)]$.  In this part, the formula
$\varphi_{|K_0| \leq i}(K_0)$ checks if the size of $K_0$ is at most
$i$, and the formula $\varphi_{\bar{C},k-i}(K_0, \dots, K_h)$ checks
if the coloring of $G[C\setminus K_0]$ can be extended with $k-i$
additional deletions. Thus the disjunction is true if the set $L$ with
$|K_0|+|L|\le k$ exists.

In what follows, we describe how to construct these subformulas, and we also construct the relational structure $\bS$ from $G$ over which this formula must be evaluated. To simplify the presentation, we refer to $K_0,\dots,K_h$ as
a coloring, even if the vertices in $K_0$ are not mapped to $V(H)$ but
removed.

\textbf{The formula  $\varphi_{part}$.} To check whether $K_0,\dots,K_h$ is a partition of $V(G)$, we use the formula \[\varphi_{part} \equiv \left (\forall x
\bigvee_{i=0}^h K_i(x) \right ) \wedge \left ( \forall x  \bigwedge_{i \neq j} \neg (K_i(x) \wedge K_j(x)) \right ).\]

\textbf{The formula $\varphi_{C}$.} To check
whether a partition $K_0,\dots,K_h$ is a list homomorphism from $G$ to $H$, we encode the lists as follows. For each $T
\subseteq \{1,\dots,h\}$, we produce a unary relation symbol $U_T$. The unary relation $U_T^\bS$ (note that adding a unary
relation to $\bS$ does not increase its treewidth) contains those vertices of $G$ whose list is $T$. The following formula
checks if $K_0,\dots,K_h$ is a list-homomorphism.
\begin{multline*}
\varphi_{C}(K_0,\dots,K_h) \equiv \\ \forall x,y \; \left ( (\neg K_0(x) \wedge \neg K_0(y) \wedge E(x,y)) \rightarrow \left (\bigvee_{(i,j) \in E(H)}  (K_i(x) \wedge K_j(y)) \right ) \right ) \wedge \\
\bigwedge_{i = 1}^h \left ( \forall x \; (K_i(x) \rightarrow \bigvee_{T \ni i} U_T(x) ) \right ).
\end{multline*}

\textbf{The formula $\varphi_{|K_0|\le j}$.}
To check whether $|K_0|\le j$, we use the formula
\[
\varphi_{|K_0|\le j} \equiv
\neg \exists x_1, \dots, x_{j+1} \left ( \bigwedge_{i=1}^{j+1} K_0(x_i) \wedge \bigwedge_{i \neq i'\, 1 \leq i,i' \leq j+1}(x_i \neq x_{i'}) \right ).
\]

\textbf{The formula $\varphi_{\bar{C},j}$.}
First we construct a set of ``indicator'' predicates. For all $q \in \{1,\dots,g_2(k)+1\}$, for all
$q$-tuples $(c_1,\dots,c_q) \in \{0,1,\dots,h\}^q$, and for all $d \in
\{0,\dots,j\}$, we produce a predicate
$R=R_{(c_1,\dots,c_q),d}$ of arity $q$.  Intuitively, the
meaning of a tuple $(v_1,\dots,v_q)$ being in this relation is that if
the clique $\{v_1,\dots,v_q\}$ has the coloring
$(c_1,\dots,c_q)$ (where $c_i=0$ means that the vertex is
deleted), then this coloring can be extended to the components of
$G \setminus C$ that attach precisely to the clique $\{v_1,\dots,v_q\}$ with $d$ further
deletions. Formally, we place a $q$-tuple $(v_1,\dots,v_q) \in
V(G)^q$ into $R$ using the procedure below. (We argue later how to
do this in FPT time.)

Fix an arbitrary ordering $\prec$ on the vertices of $C$. The purpose of $\prec$ will be to avoid
counting the number of vertices that must be removed from a single
component more than once, as we will see later. Let $D$ be
the union of all components of $G[V(G) \setminus C]$ whose neighborhood
in $C$ is precisely $\{v_1,\dots,v_q\}$, and assume without loss of generality that $v_1 \prec \dots \prec v_q$. We call such a union of components a \emph{common neighborhood component}. For each such $D$,
for each $i \in [q]$,  if $c_i \neq 0$, then
for all neighbors $u$ of $v_i$ in $D$, remove any vertex of $H$ from
the list $L(u)$ which is not a neighbor of $c_i$. Let $L'$ be the
new lists obtained this way. Observe that the coloring
$(c_1,\dots,c_q)$ of the vertices $(v_1,\dots,v_q)$ can be
extended to $(D,L)$ after removing $j$ vertices from $D$ if and only
if $(D,L')$ can be $H$-colored after removing $j$ vertices from
$D$. Now we use algorithm $\cA$ to determine the minimum number $z$ of
such deletions. The tuple $(v_1,\dots,v_q)$ is placed into $R$ if $d \ge z$. Observe that if we did not order $\{v_1,\dots,v_q\}$
according to $\prec$, then $\{v_1,\dots,v_q\}$ would be associated
with more than one indicator relation, which would lead to counting
the vertices needed to be removed from $D$ multiple times.

Let $R_1,\dots,R_m$ be an enumeration of all possible
$R_{(c_1,\dots,c_q),d}$ as defined above. Let $\bS$ be the
relational structure $(C; E(G[C]), R_1,\dots,R_m)$. Observe that if
$(v_1,\dots,v_q)$ is a tuple in one of these relations, then
$\{v_1,\dots,v_q\}$ is a clique in $\torso(G,C)$, since it is the
neighborhood of a component of $G\setminus C$. Thus the Gaifman graph
of $\bS$ is a subgraph of $\torso(G,C)$, which means that
$tw(\bS)\leq g_2(k)$. Moreover, for every component of $G\setminus C$,
as its neighborhood in $C$ is a clique in $\torso(G,C)$, the
neighborhood cannot be larger than $g_2(k)+1$: a graph with treewidth at
most $g_2(k)$ has no clique larger than $g_2(k)+1$.

We express the statement that a coloring of $G[C]$ {\em cannot} be
extended to $G\setminus C$ with at most $j$ deletions by stating that
there is a subset of components of $G\setminus C$ such that the total
number of deletions needed for these components is more than $j$.  We
construct a separate formula for each possible way the required number
of deletions can add up to more than $j$ and for each possible
coloring appearing on the neighborhood of these components. Formally,
we define a formula $\psi$ for every combination of
\begin{itemize}
\item integer $0 \le t\le j$ (number of union of components considered),
\item integers $1 \le q_1,\dots, q_t\le g_2(k)+1$ (sizes of the neighborhoods of components),
\item integers $c^i_1,\dots,c^i_{q_i}$ for every $1 \le i \le t$ (colorings of the neighborhoods), and
\item integers $0 \leq d_1,d_2,\dots,d_t \leq j+1$ such that $\sum_{i=1}^t d_i \ge j+1$ (number of deletions required in the neighborhoods)
\end{itemize}
in the following way:
\begin{multline*}
  \psi (K_0,\dots, K_h) \equiv
  \exists x_{1,1},\dots,x_{1,{q_1}},x_{2,1},\dots,x_{2,{q_2}},\dots,x_{t,1},\dots,x_{t,{q_t}} \\
  \bigwedge_{i=1}^t \left (K_{c^i_1}(x_{i,1}) \wedge \dots \wedge
    K_{c^i_{q_i}}(x_{i,q_i}) \wedge
    R_{(c^i_1,\dots, c^i_{q_i}),d_i}(x_{i,1},\dots,x_{i,q_i})
  \right ).
\end{multline*}
Let $\psi_{1},\dots,\psi_{p}$ be an enumeration of all these
formulas. (Notice that the size and the number of these formulas is
bounded by a function of $k$.) We define \[\varphi_{\bar{C},j}(K_0,\dots,
K_h) \equiv \neg \bigvee_{i=1}^{p} \psi_{i}.\]
We argue now that $\varphi_{\bar{C},j}$ is true if and only if it suffices to remove $j$ additional vertices. It follows from
the definition that given an $H$-coloring
$K_0,\dots,K_h$ of $G[C]$, if $\varphi_{\bar C,j}$ is false, then there is a subset of the components $G\setminus C$ witnessing that at
least $j+1$ vertices must be removed from $G[V(G) \setminus C]$ in order to
extend the coloring $K_0,\dots,K_h$ to $G\setminus C$.

Conversely, assume that more than $j$ vertices must be removed from
$G[V(G) \setminus C]$ in order to extend the coloring
$K_0,\dots,K_h$. Then there are neighborhoods $N_1,\dots,N_t\subseteq
C$ with $t \leq j+1$ such that at least $j+1$ vertices must be removed
from the components of $G[V(G) \setminus C]$ whose neighborhoods are
among $N_1,\dots,N_t$. By definition, this is detected by one of the
$\psi_{i}$ in the disjunction, and therefore $\varphi_{\bar{C},j}$ is false.

\textbf{Running time.}
It remains to analyze the running time of the above procedure. By the comments above and by Theorem~\ref{thm:Courcelle}, we
just need to give an upper bound on the time to construct the relations $R_1,\dots,R_m$. First we need to determine the common neighborhood components. Let $D_1,\dots,D_p$ be the components of $G[V(G) \setminus C]$. Find $N(D_1) \cap C$, and find all other components in the list $D_1,\dots,D_p$ having the same neighborhood in $C$ as $D_1$. This produces the common neighborhood component of $D_1$. To find the next common neighborhood component, find the smallest $j$ such that $N(D_j) \cap C \neq N(D_1) \cap C$, and find all other components among $D_1,\dots,D_p$ that have the same neighborhood in $C$ as $D_j$. This produces the common neighborhood component of $D_j$. We repeat this procedure until all common neighborhood components are determined. Let $E_1,\dots,E_n$ be an enumeration of all the common neighborhood components.

Observe that $V(E_i) \cap V(E_j) = \emptyset$ whenever $i \neq j$, implying $\sum_{i=1}^n |V(E_i)| \leq |V(G)|$. For each $E_i$, for all possible colorings of $N(E_i) \cap C$, all possible ways of removing at most $k$ vertices from $N(E_i) \cap C$ (which is at most $\binom{g_2(k) + 1}{k}$), we determine the lists $L'$ as described above. Then we run $\cA$ on $(E_i, L')$ with parameters $0,1,\dots,k-1$ to determine the smallest number of vertices that must be removed. Assume that $N(E_i) \cap C = \{v_1,\dots,v_q\}$, where $v_1 \prec \cdots \prec v_q$.
Then if $(c_1,\dots,c_q)$ is the tuple that encodes the current vertex coloring and the vertices removed from $N(E_i) \cap C$, and $d$ is the smallest number of vertices that must be removed from $E_i$, then $(v_1,\dots,v_q)$ is placed into the relation $R_{(c_1,\dots,c_q),d}$.

The number of times we run $\cA$ for $E_i$ (for different modifications $L'$ of the lists of the vertices of $E_i$) is
$h(k, H)$ for some $h$ depending only on $k$ and $|H|$, and $|N(E_i) \cap C| \leq g_2(k) + 1$. Recall
that the running time of $\cA$ is $f(k-1, H) \cdot x^c$, where $x$ is the size of the input. Therefore the total time $\cA$ is running is
\begin{align*}
\sum_{i=1}^n h(k, H) \cdot f(k-1, H) \cdot |V(E_i)|^c &\leq h(k, H) \cdot f(k-1, H) \cdot \left( \sum_{i=1}^n |V(E_i)| \right)^c \\
&\leq h(k, H) \cdot f(k-1, H) \cdot |V(G)|^c.
\end{align*}
\qed \end{proof}

\subsection{The case when there is no conflict}\label{no_conflict_section}

In this section, given a generic instance $(G,L,N_0,\phi_0,k)$ of \dlhombipcomps, we consider the case when there are no
conflicts among the vertices of $N_0$ (in the sense of Definition~\ref{defn:conflict}). The goal is to prove that it is
sufficient to solve the problem in the case when all the lists are fixed side fixed component. The formal problem definition is given below followed by the theorem we wish to prove.
\begin{center}
\noindent\framebox{\begin{minipage}{4.6in}
\textbf{\dlhomfsfc}, where $H$ is bipartite\\ (\textbf{FS-FC($H$)})\\
\emph{Input }: A graph $G$, a fixed side fixed component list function $L: V(G)\rightarrow 2^{V(H)}$,  and an integer $k$.\\
\emph{Parameters }: $k$, $|H|$\\
\emph{Question} : Does there exist a set $W\subseteq V(G)$ such that $|W|\leq k$ and $G\setminus W$ has a list homomorphism to $H$?
\end{minipage}}
\end{center}
\begin{theorem}\label{th:toig}
  If the \dlhomfsfcs problem is FPT (where $H$ is bipartite), then
  the \dlhom problem is also FPT.
\end{theorem}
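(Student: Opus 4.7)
The plan is to compose the chain of reductions set up in the previous subsections with the randomized shadow removal technique of Marx and Razgon~\cite{daniel-multicut-arxiv}, organized as an induction on $k$. Starting from a \dlhom instance $(G, L, k)$, Lemma~\ref{lem:ic} reduces to $O(n)$ \dlhomcomp instances, Lemma~\ref{add_disjointness} reduces each of those to at most $2^{k+1}$ \dlhomdisjcomp instances, and enumerating the $|V(H)|^{k+1}$ candidate partial list homomorphisms $\gamma$ on $W_0$ followed by the list trimming described in Section~\ref{sec:making-instance-bipartite} leaves at most $f_1(k,|H|)\cdot n$ \dlhombipcomps instances in which every list on $N_0$ is fixed side and fixed component. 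So it suffices to handle \dlhombipcomps in FPT time.

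For each such instance $(G, L, N_0, \phi_0, k)$, I would first test for a component or parity conflict inside $N_0$. In the conflict case, Lemma~\ref{conflict_main_lemma} solves the instance using the inductive algorithm $\cA$ for parameter $k-1$; the base case of the induction is essentially trivial because with $|W_0|\le 1$ all of $N_0$ sits on one side of one component of $H$ after trimming, so no conflict can arise and only the no-conflict analysis below is needed for $k=0$. In the no-conflict case, the trimmed lists on $N_0$ inside each component of $G$ pin down a single component $C$ of $H$ together with a parity-consistent side assignment. The natural follow-up is to propagate this agreement to every vertex $v$ by intersecting $L(v)$ with the side of $C$ forced by the parity of any $v$-to-$N_0$ path; components of $G$ disjoint from $N_0$ are left alone, since $\phi_0$ already colors them. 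If this restriction is safe, the resulting instance is an \dlhomfsfcs instance that the assumed FPT algorithm disposes of in time $f_2(k,|H|) \cdot n^{O(1)}$.

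The chief obstacle is precisely this soundness condition: a vertex $v$ that ends up in the shadow of $N_0$ under the (unknown) optimal $W$---i.e., one for which every $v$-to-$N_0$ path is destroyed by $W$---is not in fact forced by $N_0$ into $C$, so restricting $L(v)$ prematurely could destroy the solution. To eliminate this branch I would invoke the randomized shadow removal of \cite{daniel-multicut-arxiv}: we preprocess the instance by probabilistically marking vertices so that, with probability at least $2^{-g(k)}$, some optimal $W$ leaves no surviving vertex in the shadow of $N_0$. Conditioned on this event the propagation above is valid and the reduction to \dlhomfsfcs is correct; a standard universal-family derandomization then removes the randomness. The remaining technical work is to verify that the shadow removal guarantee transfers cleanly to the bipartite-compression setting (treating $N_0$ as the set of ``terminals'') and to track all the multiplicative overheads so that the entire cascade composes to an $f(k,|H|)\cdot n^{O(1)}$ bound while preserving the $k-1$ recursion used inside Lemma~\ref{conflict_main_lemma}.
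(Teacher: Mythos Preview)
Your proposal is correct and follows essentially the same route as the paper: iterative compression $\to$ disjoint compression $\to$ guessing $\gamma$ on $W_0$ $\to$ \dlhombipcomps, then the conflict/no-conflict split, with Lemma~\ref{conflict_main_lemma} handling the conflict case by recursion on $k-1$, and shadow removal in the no-conflict case to justify propagating the side/component information from $N_0$ to all of $G$ before handing off to the assumed \dlhomfsfcs algorithm. The paper makes the last step concrete by introducing the intermediate problem FS-FC-IG($H$), adding a dummy vertex $s$ adjacent to all of $N_0$, arguing that a carefully chosen optimal $W$ is the neighborhood of an $\{s\}$-closest set, and then invoking the derandomized \textsc{DeterministicSets} theorem directly rather than the randomized version plus a universal family; but this is exactly the ``technical work'' you flag.

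One small correction: your base-case justification is off. Even with $|W_0|\le 1$, the vertices of $N_0$ can lie on \emph{different} sides of the bipartite graph $G$ while all having lists in the same side of $H$ (they are all neighbours in $G_0$ of the single vertex $w\in W_0$, but $G_0$ need not be bipartite), and this is a parity conflict in the sense of Definition~\ref{defn:conflict}. This does not matter for the induction, since the true base case $k=0$ is simply \lhom and is solved in polynomial time without any of the machinery.
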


Recall that the lists of the vertices in $N_0$ are fixed side
fixed component and $\phi_0$ is a list homomorphism from $G\setminus N_0 \rightarrow H$.
We process the \dlhombipcomps\ instance in the following way. First, if a
component of $G$ does not contain any vertex of $N_0$, then this
component can be colored using $\phi_0$. Hence such components can be
removed from the instance without changing the problem. Consider a
component $C$ of $G$ and let $v$ be a vertex in $C\cap N_0$. Recall
that $L(v)$ is fixed side fixed component by the definition of
\dlhombipcomps; let $H_v$ be the component of $H$ such that $L(v) \subseteq H_v$ in $H$, and let $(S_v,\bar S_v)$ be the bipartition
of $H_v$ such that $L(v) \subseteq S_v$. For every vertex $u$ in $C$ that is in the same
side of $C$ as $v$, let $L'(u)=L(u)\cap S_v$; for every vertex $u$
that is in the other side of $C$, let $L'(u)=L(u)\cap \bar S_v$. Note
that since the instance does not contain any component or parity conflicts, this
operation on $u$ is the same no matter which vertex $v\in C\cap N_0$
is selected: every vertex in $C\cap N_0$ forces $L(u)$ to the same
side of the same component of $H$. The definition of $L'$ is motivated
by the observation that if $u$ remains connected to $v$ in $G\setminus
W$, then $u$ has to use a color from $L'(u)$: its color has to be in
the same component $H_v$ as the colors in $L(v)$, and whether it uses
colors from $S_v$ or $\bar S_v$ is determined by whether it is on the
same side as $L(v)$ or not.

If the fixed side fixed component instance $(G,L',N_0,\phi_0,k)$ has a
solution, then clearly $(G,L,N_0,\phi_0,k)$ has a solution as
well. Unfortunately, the converse is not true: by moving to the more
restricted set $L'$, we may lose solutions. The problem is that even if a
vertex $u$ is in the same side of the same component of $G$ as some
$v\in N_0$, if $u$ is separated from $v$ in $G\setminus W$, then the
color of $u$ does not have to be in the same side of the same
component of $H$ as $L(v)$; therefore, restricting $L(u)$ to $L'(u)$
is not justified. However, we observe that the vertices of $G$ that are separated from
$N_0$ in $G\setminus W$ do not significantly affect the solution: if
$C$ is a component of $G\setminus W$ disjoint from $N_0$, then
$\phi_0$ can be used to color $C$. Therefore, we redefine the problem
in a way that if a component of $G\setminus W$ is disjoint from $N_0$,
then it is ``good'' in the sense that we do not require a coloring for
these components.

\begin{center}
\noindent\framebox{\begin{minipage}{4.6in}
\textbf{\dlhomfsfcig(FS-FC-IG($H$))}\\
\emph{Input }: A graph $G$, a fixed side fixed component list function $L: V(G)\rightarrow 2^{V(H)}$, a set of vertices $N_0\subseteq V(G)$, and an integer $k$.\\
\emph{Parameter }: $k$, $|H|$\\
\emph{Question} : Does there exist a set $W\subseteq V(G)$ such that $|W|\leq k$ and for every component $C$ of
 $G \setminus W$ with $C\cap N_0\neq \emptyset$, there is a list homomorphism from $(G[C],L|_C)$ to $H$?
\end{minipage}}
\end{center}

If the instance $(G,L,N_0,\phi_0,k)$ of \dlhombipcomps\ has a solution, then the modified FS-FC-IG($H$) instance $(G,L',N_0,k)$ also has a solution: for every component $C$ of $G\setminus W$ intersecting $N_0$,
the vertices in $C\cap N_0$ force every vertex of $C$ to respect the more restricted lists $L'$. Conversely, a solution of
instance $(G,L',N_0,k)$ of FS-FC-IG($H$) can be turned into a solution for instance $(G,L,N_0,\phi_0,k)$ of \dlhombipcomps: for
every component of $G\setminus W$ intersecting $N_0$, the coloring using the lists $L'$ is a valid coloring also for the less
restricted lists $L$ and each component disjoint from $N_0$ can be colored using $\phi_0$. Thus we have established a
reduction from \dlhombipcomps\ to FS-FC-IG($H$). In the rest of this section, we further reduce FS-FC-IG($H$) to FS-FC($H$),
thus completing the proof of Theorem~\ref{th:toig}.
\subsubsection{Reducing FS-FC-IG($H$) to FS-FC($H$)}
\label{sec:fsfcig-to-fsfc}

If we could ensure that the solution $W$ has the property that $G\setminus W$
has no component $C$ disjoint from $N_0$, then FS-FC-IG($H$) and
FS-FC($H$) would be equivalent. Intuitively, we would like to remove somehow every
such component $C$ from the instance to ensure this equivalency. This
seems to be very difficult for at least two reasons: we do not know
the deletion set $W$ (finding it is what the problem is about), hence
we do not know where these components are, and it is not clear how to
argue that removing certain sets of vertices does not change the
problem. Nevertheless, the ``shadow removal'' technique of Marx and
Razgon~\cite{daniel-multicut-arxiv} does precisely this: it allows us
to remove components separated from $N_0$ in the solution.

Let us explain how the shadow removal technique can be invoked in our context. We need the following definitions:

\begin{definition}{\bf (closest)}
\label{defn-thin-set} Let $S\subseteq V(G)$. We say that a set $R\supseteq S$ is an $S$-\emph{closest} set if there is no
$R'\subset R$ with $S\subseteq R'$ and $|N(R)|\geq |N(R')|$.
\end{definition}

\begin{definition}{\bf (reach)}
Let $G$ be a graph and $A, X \subseteq V(G)$. Then $R_{G \setminus X}(A)$ is the set of vertices reachable from a vertex in $A$ in the graph $G \setminus X$.
\end{definition}

The following lemma connects these definitions with our problem: we argue that solving FS-FC-IG($H$) essentially requires finding a closest set. We construct a new graph $G'$ from $G$ by adding a new vertex $s$ to $G$, and all edges of the form $\{s, v\}$, $v \in N_0$. Among all solutions of minimum size for FS-FC-IG($H$), fix $W$ to be a solution such that $R_{G' \setminus W}(\{s\}) = R_{G \setminus W}(N_0) \cup \{s\}$ is as small as possible, and set $R = R_{G' \setminus W}(\{s\})$.

\begin{lemma}
It holds that $W=N(R)$, and $R$ is an $\{s\}$-closest set. \label{lem:closest-set}
\end{lemma}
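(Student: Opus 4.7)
The plan is to prove both statements by exchange arguments against the two–tier minimality used to pick $W$ (first minimum size, then smallest reachable set from $s$ in $G'$). I will use throughout that, by the definition of the problem, the set $R\setminus\{s\}$ is exactly the union of components of $G\setminus W$ that intersect $N_0$, and so the fixed list homomorphism $\phi$ witnessing that $W$ is a solution is defined on all of $R\setminus\{s\}$.

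For the equality $W=N(R)$, the inclusion $N(R)\subseteq W$ is immediate: if $v\in V(G')\setminus R$ has a neighbor in $R$ in $G'$ and $v\notin W$, then $v$ would be reachable from $s$ in $G'\setminus W$, contradicting $v\notin R$. For the reverse inclusion, I will suppose some $w\in W\setminus N(R)$ exists and show that $W\setminus\{w\}$ is still a valid solution, violating the minimality of $|W|$. Because $w$ has no neighbor in $R$, putting $w$ back in does not attach $w$ to $R\setminus\{s\}$: the components of $G\setminus(W\setminus\{w\})$ that meet $N_0$ are precisely the components of $G[R\setminus\{s\}]$, on which $\phi$ is already a valid list homomorphism, while the component containing $w$ is the union of $\{w\}$ with components of $G\setminus W$ that were disjoint from $N_0$, and so it is itself disjoint from $N_0$ and needs no coloring.

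For the closeness claim, suppose towards contradiction that some $R'\subsetneq R$ with $s\in R'$ satisfies $|N(R')|\le |N(R)|=|W|$, and set $W'':=N(R')$ (note $s\in R'$, so $s\notin W''$, and $W''\subseteq V(G)$). Since no vertex of $R'$ has a neighbor in $G'\setminus W''$ outside $R'$, I get $R_{G'\setminus W''}(\{s\})\subseteq R'$. Any vertex $v\in N_0\setminus W''$ is adjacent to $s$ in $G'$, so its component in $G\setminus W''$ is contained in $R_{G'\setminus W''}(\{s\})\setminus\{s\}\subseteq R'\setminus\{s\}\subseteq R\setminus\{s\}$. The restriction of $\phi$ to each such component is a valid list homomorphism, so $W''$ is a solution of size at most $|W|$; minimality of $|W|$ forces $|W''|=|W|$, so $W''$ is also a minimum solution. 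But then $R_{G'\setminus W''}(\{s\})\subseteq R'\subsetneq R$ contradicts the choice of $W$ as the minimum solution with smallest reachable set from $s$.

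The main obstacle, sidestepped by both arguments, is that FS-FC-IG$(H)$ treats components disjoint from $N_0$ as free, so shrinking or reshaping the cut could a priori create new $N_0$-meeting components that fail to admit a list homomorphism. In both steps this is avoided because every $N_0$-meeting component of the new instance is forced to be contained in $R\setminus\{s\}$, where $\phi$ already provides a valid coloring that simply restricts; everything else that changes stays outside $N_0$ and imposes no constraint.
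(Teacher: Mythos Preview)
Your proof is correct and follows essentially the same exchange-argument approach as the paper: both parts exploit the two-tier minimality of $W$ by exhibiting an alternative deletion set whose $N_0$-meeting components lie inside $R\setminus\{s\}$, where the existing homomorphism restricts. The only cosmetic difference is that for $W=N(R)$ the paper replaces $W$ by $N(R)$ in one shot while you peel off a single vertex $w\in W\setminus N(R)$; the underlying reasoning is identical.
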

\begin{proof}
  We note that $s \not \in W$. Clearly, $N(R) \subseteq W$. If $W\neq N(R)$, then let us define
  $W'=N(R)$. Now $G\setminus W$ and $G\setminus W'$ have the same
  components intersecting $N_0$: every vertex of $W\setminus W'$ is in
  a component of $G\setminus W'$ that is disjoint from
  $N_0$. Therefore, \textsc{FS-FC-IG}($H$) has a solution with
  deletion set $W'$, contradicting the minimality of $W$.

  If $R$ is not an $\{s\}$-closest set, then there exists a set $R'$
  such that $\{s\} \subseteq R'\subset R$ and $|N(R')|\le
  |N(R)|=|W|$. Let $W'=N(R')$, we have $|W'|\le |W|\le k$. We now claim that $W'$ can be used as a deletion set for  a solution of
  \textsc{FS-FC-IG}($H$). If we show this, then $R_{G'\setminus
    W'}(\{s\})\subseteq R'\subset R$ contradicts the minimality of $W$.

  For a vertex $x$, let $C_{G}(x)$ denote the vertices of the component
  of $G$ that contains $x$. We now show that if $x\in N_0$, then
  $C_{G\setminus W'}(x)\subseteq C_{G\setminus W}(x)$. This shows that
  $W'$ is also a solution, since we know that $W$ is a solution for
  \textsc{FS-FC-IG}($H$), i.e., each component of $G\setminus W$ which
  intersects $N_0$ has a homomorphism to $H$, and hence so does any
  subgraph. Let $x\in N_0$ and $y\in C_{G\setminus W'}(x)$. Then $x,
  y$ are in the same component of $R'$, and hence also in $R$ as
  $R'\subset R$, i.e., $y\in C_{G\setminus W}(x)$.
\qed \end{proof}

The following theorem is the derandomized version of the shadow
removal technique introduced by Marx and Razgon (see Theorem 3.17
of~\cite{daniel-multicut-arxiv}).

\begin{theorem}
  There is an algorithm $\textsc{DeterministicSets}(G,S,k)$ that,
  given an undirected graph $G$, a set $S \subseteq V(G)$, and an
  integer $k$, produces $t=2^{O(k^3)}\cdot \log |V(G)|$ subsets $Z_1,
  Z_2, \ldots, Z_t$ of $V(G)\setminus S$ such that the following
  holds: For every $S$-closest set $R$ with $|N(R)|\leq k$, there is
  at least one $i\in [t]$ such that
\begin{enumerate}
\item $N(R)\cap Z_i=\emptyset$, and
\item $V(G)\setminus (R\cup N(R))\subseteq Z_i$.
\end{enumerate}
\label{thm:shadow-removal}
The running time of the algorithm is $2^{O(k^3)}\cdot n^{O(1)}$.
\end{theorem}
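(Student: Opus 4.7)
The plan is to prove this by first designing a randomized algorithm that, with probability $2^{-O(k^{3})}$, produces a single set $Z$ satisfying both conditions (1) and (2) for any fixed $S$-closest set $R$ with $|N(R)|\le k$, and then derandomizing via standard pseudorandom constructions. The final family $Z_1,\dots,Z_t$ is obtained by running the derandomized sampler $O(\log|V(G)|)$ times to boost coverage over all candidate closest sets.

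For the randomized part, I would first observe that the boundary $N(R)$ of an $S$-closest set $R$ with $|N(R)|\le k$ behaves like an \emph{important $S$-$T$ separator}: by the closeness property, any strictly smaller candidate $R'\supseteq S$ with $R'\subset R$ must have strictly larger boundary, which is the defining feature of importance. The Chen--Liu--Lu bound then tells us there are at most $4^{k}$ possible boundaries $N(R)$ to worry about, each of size at most $k$. So the correctness demand reduces to intersecting $Z$ correctly with a small, structured set of vertices rather than with the potentially huge shadow $V(G)\setminus(R\cup N(R))$: condition (2) follows from condition (1) \emph{provided} we iteratively ``peel off'' layers corresponding to the relevant important separators and ensure $Z$ respects the peeled structure. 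Concretely, I would sample $Z$ by independently including each vertex with probability $1/2$ after conditioning on a carefully chosen $k$-bounded pattern, boosting the per-separator success probability to $2^{-O(k)}$ and then union-bounding over a chain of at most $O(k^{2})$ nested important separators that arise while processing $R$.

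For the derandomization, I would invoke the Naor--Schulman--Srinivasan construction of $(n,O(k))$-universal sets, which produces in time $2^{O(k)} n^{O(1)}$ a family of $2^{O(k)}\log n$ subsets realizing every intersection pattern with every vertex set of size at most $O(k)$. Composing this with the chain argument above, where each of the $O(k^{2})$ levels contributes a factor of $2^{O(k)}$ to the number of patterns, yields the required $t = 2^{O(k^{3})}\cdot\log|V(G)|$ family. The overall running time of $2^{O(k^{3})}\cdot n^{O(1)}$ then follows directly from the NSS running time plus the cost of enumerating and processing the levels.

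The main obstacle I expect is the tight $2^{O(k^{3})}$ bound rather than something weaker like $2^{O(k^{k})}$: a one-shot argument using only the $4^{k}$ important-separator bound would not capture the fact that the shadow of $R$ can itself contain shadows of further important separators, and these nested shadows must be handled simultaneously. The fix is precisely the layered peeling described above, together with a careful accounting that each layer costs only $2^{O(k)}$ extra patterns; this is the technical heart of the Marx--Razgon argument (their Theorem~3.17), and my proof would ultimately mirror theirs.
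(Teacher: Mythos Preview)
The paper does not prove this theorem. It is quoted verbatim as a black box and attributed to Marx and Razgon (``see Theorem~3.17 of~\cite{daniel-multicut-arxiv}''); immediately after the statement the paper simply applies it. So there is no proof in the paper to compare your attempt against.

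Since you are effectively trying to reconstruct the Marx--Razgon argument, a brief comment on your sketch: the broad ingredients (important separators, random sampling, derandomization via universal sets) are right, but several steps are imprecise. You invoke ``important $S$--$T$ separators'' without ever saying what $T$ is; in the actual proof one works with important separators between $S$ and individual vertices, not a single global target. Your claim that ``condition~(2) follows from condition~(1) provided we peel off layers'' is not correct as stated: the two conditions concern disjoint vertex sets (the boundary $N(R)$ must avoid $Z$, while the entire shadow must be contained in $Z$), and the randomized step has to achieve both simultaneously. The $2^{-O(k^3)}$ bound in Marx--Razgon does not come from a chain of $O(k^2)$ nested separators each costing $2^{O(k)}$; rather, one argues that the shadow decomposes into components whose membership in $Z$ can be certified by the sampled status of a bounded number of important-separator vertices, and a careful count of these certificates yields the exponent. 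Your picture is in the right neighborhood but does not match the actual mechanism, and as written would not yield a proof.
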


By Lemma~\ref{lem:closest-set} we know that $R=R_{G'\setminus W}(\{s\})$ is an $\{s\}$-closest set.  Thus we can use
Theorem~\ref{thm:shadow-removal} to construct the sets $Z_1$, $\dots$, $Z_t$. Then we branch on choosing one such $Z=Z_i$ and
we can assume in the following that we have a set $Z$ satisfying the following properties:
\begin{equation}
W\cap Z = \emptyset \ \text{ and }\ V(G)\setminus (R\cup W)\subseteq Z.
\tag{$\ast$}
\end{equation}
(Note that $W=N(R)$ implies $V(G)\setminus (R\cup
N(R))=V(G)\setminus (R\cup W)$). That is, $Z$ does not contain any
vertex of the deletion set $W$, but it completely covers the set of
vertices separated from $N_0$ by $W$, and possibly covers some other
vertices not separated from $N_0$. Now we show how to use this
property of the set $Z$ to reduce FS-FC-IG($H$) to FS-FC($H$).

For each component $C$ of $G[Z]$, we run the decision algorithm (see
for example \cite{Feder/et_al:99:LHC}) for \lhom with the list
function $L|_{C}$. If $C$ has no list homomorphism to $H$, then we
call $C$ a \emph{bad component} of $Z$; otherwise, we call $C$ a
\emph{good component} of $Z$. The following lemma shows that all
neighbors of a bad component $C$ in the graph $G \setminus Z$ must be
in the solution $W$.

\begin{lemma}\label{lem:deleting-all-nbrs-is-safe}
  Let $Z$ be a set satisfying $(\ast)$. If $C$ is a bad component of
  $G[Z]$ (i.e., $(C, L|_C)$ has no list homomorphism to $H$), then all
  vertices of the neighborhood of $C$ in $G \setminus Z$ belong to
  $W$.
\end{lemma}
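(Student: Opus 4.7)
The plan is to argue by contradiction: suppose there exists $v\in N(C)\setminus Z$ with $v\notin W$, and derive that $C$ must in fact admit a list homomorphism to $H$, contradicting the hypothesis that $C$ is a bad component. The whole argument hinges on the two clauses of $(\ast)$, which together let us classify vertices outside $Z$ as being either in $W$ or in $R$.

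First I would observe that, by the second clause of $(\ast)$, every vertex of $G$ lies in $R\cup W\cup Z$; so a hypothetical $v\in N(C)\setminus Z$ with $v\notin W$ forces $v\in R$. Next, since $C\subseteq Z$ and $W\cap Z=\emptyset$ (first clause of $(\ast)$), the whole of $C$ is disjoint from $W$. Picking the vertex $u\in C$ adjacent to $v$, the edge $uv$ lives in $G\setminus W$, and together with $v\in R$ this shows $u\in R$ as well (vertices reachable from $s$ in $G'\setminus W$ remain reachable after one more step along a non-deleted neighbor). Since $C$ is connected in $G[Z]$ and contains no vertex of $W$, every vertex of $C$ is reachable from $u$ within $G\setminus W$, so the entire component $C$ lies inside $R$.

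Now by definition $R\setminus\{s\}$ is precisely the set of vertices of $G$ reachable from some vertex of $N_0\setminus W$ in $G\setminus W$. Hence the component $C^\star$ of $G\setminus W$ that contains $C$ must also contain at least one vertex of $N_0$, i.e.\ $C^\star\cap N_0\neq\emptyset$. But $W$ is assumed to be a solution of the \dlhomfsfcig\ instance, so every component of $G\setminus W$ meeting $N_0$ has a list homomorphism to $H$; in particular $(C^\star,L|_{C^\star})$ does. Restricting such a homomorphism to $C\subseteq C^\star$ yields a list homomorphism from $(C,L|_C)$ to $H$, contradicting the assumption that $C$ is bad. The main (very mild) subtlety to keep track of is the role of the auxiliary vertex $s$: one must be careful that $R$ really does consist of $s$ together with the vertices of $G$ reachable from $N_0\setminus W$ in $G\setminus W$, so that membership in $R$ translates into being in the same $G\setminus W$-component as some vertex of $N_0$.
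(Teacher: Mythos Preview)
Your proposal is correct and follows essentially the same approach as the paper's proof: assume a neighbor $v$ of $C$ outside $Z$ is not in $W$, use $(\ast)$ to conclude $v\in R$, propagate reachability to all of $C$ (since $C\subseteq Z$ is disjoint from $W$), and contradict the fact that $W$ is a solution for \textsc{FS-FC-IG}$(H)$. The paper's version is more terse (it skips the explicit step of showing all of $C$ lies in $R$ and just says ``$C$ is also connected to $N_0$''), but the logic is identical.
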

\begin{proof}
  Recall that by assumption, $Z$ contains any vertex that is separated
  from $N_0$ by $W$. Therefore, if a neighbor $v$ of $C$ is in $G
  \setminus Z$, then $v$ is connected to $N_0$ in $G\setminus W$. It
  follows that $C$ is also connected to $N_0$ as $Z$ (and hence $C$)
  is disjoint from $W$. Since $(C, L|_C)$ has no list homomorphism to
  $H$, this contradicts that $W$ is a solution for
  \textsc{FS-FC-IG}($H$).
\qed \end{proof}
By Lemma~\ref{lem:deleting-all-nbrs-is-safe}, we may safely remove the
neighborhood of every bad component $C$ (decreasing the parameter $k$
appropriately) and then, as the component $C$ becomes separated from
$N_0$, we can remove $C$ as well.  We define \[B = \{v \;|\; \text{$v$
  is a vertex in a bad component}\}\] and \[X = \{v \;|\; \text{$v$ is a
  neighbor of a bad component in $G \setminus Z$}\}.\] The following
lemma concludes our reduction.
\begin{lemma}
The instance $(G \setminus (X \cup B), L, k-|X|)$ of FS-FC($H$) is a YES instance if and only if $(G,L,N_0,k)$ is a YES
instance of FS-FC-IG($H$). \label{lem:equivalence-of-2-instances}
\end{lemma}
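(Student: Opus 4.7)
The plan is to prove both directions separately, relying on the key structural observation that $N_G(B) \subseteq B \cup X$: since $B$ is a union of whole components of $G[Z]$ and $X$ consists of precisely the neighbours of bad components in $G \setminus Z$, every edge of $G$ leaving $B$ must have its other endpoint in $X$. Consequently, for any $W \supseteq X$, the set $B$ is a union of components of $G \setminus W$, and every component of $G \setminus W$ is either entirely in $B$ or entirely in $V(G) \setminus (B \cup W)$; in the latter case it coincides with a component of $(G \setminus (X \cup B)) \setminus (W \setminus X)$. Both directions exploit this decomposition, and the one delicate point (handled last) is dealing with components of $G \setminus W$ that land inside $B$.

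\emph{Backward direction ($\Leftarrow$).} Given a solution $W'$ of size at most $k-|X|$ for $(G \setminus (X \cup B), L, k-|X|)$, I would set $W := W' \cup X$, so $|W| \le k$. By the structural observation, a component of $G \setminus W$ disjoint from $B$ is a component of $(G \setminus (X \cup B)) \setminus W'$ and hence admits a list homomorphism to $H$ by hypothesis. A component of $G \setminus W$ contained in $B$ lies in some bad component of $G[Z]$; to complete the argument I would ensure (via the preprocessing step described in the last paragraph) that no bad component intersects $N_0$, so no such component requires a list homomorphism. Thus $W$ is a solution for $(G, L, N_0, k)$.

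\emph{Forward direction ($\Rightarrow$).} Starting from a YES-instance of FS-FC-IG, I would fix the solution $W$ as in Lemma~\ref{lem:closest-set}, branch over the sets $Z_1, \dots, Z_t$ produced by Theorem~\ref{thm:shadow-removal} with $S = \{s\}$, and focus on the $Z$ for which $(\ast)$ holds. Lemma~\ref{lem:deleting-all-nbrs-is-safe} gives $X \subseteq W$, and $(\ast)$ gives $W \cap Z = \emptyset$, hence $W \cap B = \emptyset$; therefore $W' := W \setminus X$ has size at most $k - |X|$ and is disjoint from $X \cup B$. To verify that $W'$ is a solution for the FS-FC instance, I would inspect every component $C$ of $(G \setminus (X \cup B)) \setminus W' = G \setminus (W \cup B)$. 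If $C$ intersects $N_0$, a list homomorphism is provided by $W$ being a solution for FS-FC-IG. If $C$ is disjoint from $N_0$, then $C \subseteq V(G) \setminus R \subseteq Z$ by $(\ast)$, and since $W \cap Z = \emptyset$, connectivity in $G \setminus W$ restricted to $Z$ coincides with connectivity in $G[Z]$, so $C$ is in fact a component of $G[Z]$; as $C \cap B = \emptyset$, $C$ is good, and hence admits a list homomorphism.

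\emph{Main obstacle.} The genuinely delicate point is the case $B \cap N_0 \neq \emptyset$, where a bad component of $G[Z]$ could contain a vertex of $N_0$ and the backward direction would break. I would resolve it by a preprocessing step in the algorithm: for each candidate $Z_i$, check whether any bad component of $G[Z_i]$ contains a vertex of $N_0$, and discard $Z_i$ if so. This preprocessing is lossless for the forward direction because for the ``right'' $Z$, satisfying $(\ast)$ for the fixed FS-FC-IG solution $W$, no bad component of $G[Z]$ can meet $N_0$: if $v \in N_0$ sat in a bad component $C \subseteq Z$, then $W \cap Z = \emptyset$ would force $C$ to be contained in the $G \setminus W$-component of $v$, whose list homomorphism (guaranteed by $W$ being a FS-FC-IG solution) would restrict to a list homomorphism of $C$, contradicting that $C$ is bad. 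Thus the preprocessing preserves all ``right'' candidate sets while enabling the backward direction to conclude without caveat.
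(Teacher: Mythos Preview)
Your proof is correct and follows essentially the same route as the paper's: both directions use the structural fact $N_G(B)\subseteq X$, take $W'\cup X$ (respectively $W\setminus X$) as the deletion set, and split the components of the resulting graph into those meeting $N_0$ (handled by the FS-FC-IG solution $\varphi$) and those disjoint from $N_0$ (which are shown to be good components of $G[Z]$). The paper's forward direction is argued in the same way, and its backward direction simply asserts that ``every vertex of $B$ is separated from $N_0$ by $X$,'' which is exactly your claim $B\cap N_0=\emptyset$.

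Where you go beyond the paper is in justifying $B\cap N_0=\emptyset$ and in worrying about soundness across \emph{all} branches $Z_i$. The paper proves the lemma under the standing hypothesis that $Z$ satisfies $(\ast)$; under that hypothesis your argument (a bad component meeting $N_0$ would sit inside a colorable component of $G\setminus W$, contradicting badness) indeed gives $B\cap N_0=\emptyset$, so the paper's assertion is correct but left implicit. Your preprocessing step---discarding any $Z_i$ whose bad components meet $N_0$---additionally guarantees that a YES answer on a ``wrong'' branch cannot produce a false positive, a point the paper does not address. This is a clean, valid fix and a genuine gain in rigor over the paper's exposition, though the underlying proof strategy is the same.
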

\begin{proof}
Suppose $(G\setminus (X \cup B), L, k-|X|)$ is a YES instance of FS-FC($H$), and let $W$ be a solution. The set $W \cup
X$ is a solution for the instance $(G,L,N_0,k)$ of FS-FC-IG($H$): every vertex of $B$ is separated from $N_0$ by
$X$, and $W$ is a solution for the rest of $G$. Observe that $|W\cup X| = |W|+|X|\leq (k-|X|)+|X| = k$.

Conversely, suppose that $(G,L,N_0,k)$ is a YES instance of
FS-FC-IG($H$). Choose the same solution $W$ as before, and let
$\varphi$ be a list homomorphism from the components of $G \setminus
W$ that contain a vertex of $N_0$ to $H$. By
Lemma~\ref{lem:deleting-all-nbrs-is-safe}, we have that $X$ is a
subset of $W$.  The size of $W \setminus X$ is clearly at most $k-|X|$.
We claim that $W \setminus X$ is a solution for the instance $(G
\setminus (X \cup B), L, k-|X|)$ of FS-FC($H$), i.e., that there is a
list homomorphism from $\left( G \setminus (X \cup B) \right)
\setminus \left(W \setminus X)\right)= (G \setminus W) \setminus B$ to
$H$.

Recall that $Z$ contains all components separated from $N_0$ by $W$, and for each such component we checked whether there
was a list homomorphism to $H$. The (bad) components which did not have a list homomorphism to $H$ are not present in $(G
\setminus W) \setminus B$. For the (good) components which had a list homomorphism $\psi$ to $H$, we can just obviously use
$\psi$. Since the rest of the components have a vertex from $N_0$, for these components we can use $\varphi$.
\qed \end{proof}
\subsection{Solving the FS-FC($H$) problem for skew decomposable graphs}\label{sec:solving-fsfc}

The last step in our chain of reductions relies on an inductive construction of the bipartite target graph $H$. Recall that we are assuming that neither $P_6$, the path on 6 vertices, nor $C_6$, the cycle on 6 vertices are induced subgraphs of $H$. This is equivalent to assuming that $H$ is skew decomposable, meaning that $H$ admits a certain simple inductive construction (see the definitions below). For any skew decomposable bipartite graph $H$, this construction was used to inductively build a logspace algorithm for \lhom, (Egri et al.~\cite{Egri/et_al:2011:Complexity}). Interestingly, the construction can also be used when we want to obtain an algorithm for FS-FC($H$). We recall the relevant definitions and results from \cite{Egri/et_al:2011:Complexity}. The \emph{special sum} operation is an operation to compose bipartite graphs.

\begin{definition}\label{special_sum}
(\textbf{special sum}) Let $H_1, H_2$ be two bipartite graphs with bipartite classes $T_1, B_1$ and $T_2, B_2$, respectively, such
that neither of $T_1$ or $B_2$ is empty. The \emph{special sum} $H_1 \oslash H_2$ is obtained by taking the disjoint union of
the graphs, and adding all edges $\{u,v\}$ such that $u \in T_1$ and $v \in B_2$.
\end{definition}

\begin{definition}\label{def:sd}
(\textbf{skew decomposable}) A bipartite graph $H$ is called \emph{skew decomposable} if $H \in \cS$, where the graph class
$\cS$ is defined as follows:
  \begin{itemize}
\item $\cS$ contains the graph that is a single vertex;
\item If $H_1, H_2 \in \cS$ then their disjoint union $H_1 \uplus H_2$ also belongs to $\cS$;
\item If $H_1, H_2 \in \cS$ then $H_1 \oslash H_2$ also belongs to $\cS$.
\end{itemize}
\end{definition}

\begin{theorem}[\cite{Egri/et_al:2011:Complexity}]
A bipartite graph $H$ is skew decomposable if and only if neither $P_6$, the path on 6 vertices, nor $C_6$, the cycle on 6 vertices are induced subgraphs of $H$.
\end{theorem}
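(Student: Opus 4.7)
The plan is to prove both implications by induction; the forward direction is a short inductive verification, while the backward direction requires real structural work.

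For the forward direction, I would induct on the construction of $H \in \cS$. The single-vertex base is immediate. For $H = H_1 \uplus H_2$, since $P_6$ and $C_6$ are connected, any induced copy must lie in a single summand, and the inductive hypothesis applies. For $H = H_1 \oslash H_2$ with bipartitions $(T_1, B_1)$ and $(T_2, B_2)$, suppose $S \subseteq V(H)$ induces a $P_6$ or $C_6$ and set $p = |S \cap T_1|$, $q = |S \cap B_2|$. Since $T_1$ and $B_2$ are completely joined in $H_1 \oslash H_2$, $S$ induces a $K_{p,q}$; since neither $P_6$ nor $C_6$ contains an induced $K_{2,2}$, we have $p \le 1$ or $q \le 1$. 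When $p = 0$ or $q = 0$, the sets $S \cap V(H_1)$ and $S \cap V(H_2)$ have no cross edges, so connectedness of the path/cycle forces $S$ into a single summand and the inductive hypothesis finishes. The remaining cases ($pq = 1$ and $(p,q) \in \{(1,2),(2,1)\}$) are handled by tracking bipartition classes around the path or cycle: in each case, the requirement that the only $V(H_1)$-$V(H_2)$ cross edges run between $T_1$ and $B_2$ forces either a missing edge in the supposed induced copy or a nonexistent $B_1$-$T_2$ edge.

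For the backward direction I would induct on $|V(H)|$. The one-vertex case is the base; if $H$ is disconnected, each component is $(P_6, C_6)$-free and strictly smaller, so the inductive hypothesis combined with the disjoint-union rule finishes. The substantive case is when $H$ is connected with at least two vertices: I must exhibit a nontrivial decomposition $H = H_1 \oslash H_2$. A natural candidate is to pick a vertex $v$ whose neighborhood is inclusion-maximal on one side of the bipartition, set $T_1 := N(v)$ and $B_2 := \{u : N(u) \supseteq T_1\}$ (so $v \in B_2$ and $T_1$ is completely joined to $B_2$ by construction), and let $B_1$ and $T_2$ collect the remaining vertices of each side. One then has to check that (i) there is no edge between $T_2$ and $B_1$, and (ii) both $H_1 := H[T_1 \cup B_1]$ and $H_2 := H[T_2 \cup B_2]$ are proper induced subgraphs. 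Step (i) uses both forbidden patterns: any putative $T_2$-$B_1$ edge, together with $v$ and a certificate vertex witnessing the inclusion-maximality choice of $T_1$, extends to an induced $P_6$, or in boundary configurations to an induced $C_6$, contradicting the hypothesis. Since $H_1$ and $H_2$ are induced subgraphs of $H$ they remain $(P_6, C_6)$-free, so induction closes the argument.

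The main obstacle is step (i) of the backward direction: identifying precisely the right partition and ruling out stray $T_2$-$B_1$ edges using only $P_6$ and $C_6$ as forbidden patterns is the delicate step and must be done with the inclusion-maximality choice of $v$ firmly in mind, since a naive vertex choice can fail. This decomposition is exactly the content of the structural result of \cite{Egri/et_al:2011:Complexity}, so in practice I would import their lemma rather than reproduce its proof; the forward direction, by contrast, is self-contained and essentially a page of casework.
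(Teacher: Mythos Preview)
The paper does not prove this theorem at all: it is stated with a citation to \cite{Egri/et_al:2011:Complexity} and no proof is given in the present paper, so there is nothing here to compare your argument against. Your two-direction inductive plan is exactly the natural approach and is, as you yourself note, essentially what one would find in the cited reference; the forward-direction casework is sound (your enumeration of the surviving $(p,q)$ cases is correct once one also uses that $P_6$ and $C_6$ have maximum degree $2$, ruling out an induced $K_{1,3}$), and for the backward direction you correctly identify the inclusion-maximal-neighbourhood decomposition and the single delicate step, then defer to the source for it.
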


To give an FPT algorithm for FS-FC($H$), we induct on the construction of $H$ as specified in Definition~\ref{def:sd}. Our induction hypothesis states that if $H = H_1 \uplus H_2$ or $H = H_1 \oslash H_2$, then we already have an algorithm $\cA_i$ for \dlhoma{H_i} with running time $f(H_i,k) \cdot x^c$ (where $x$ is the size of the input and $c$ is a sufficiently large constant), $i \in \{1,2\}$. In the induction step, we use the algorithms $\cA_1$ and $\cA_2$ to construct an algorithm for FS-FC($H$) with running time $f(H,k) \cdot x^c$.

The base case of the induction, i.e.\ when $H$ is a single vertex is just the vertex cover problem (after removing vertices with empty lists and reducing $k$ accordingly). The induction step is taken care of by the following two lemmas.
\begin{lemma}\label{lem:disjoint_union}
  Assume that $H = H_1 \uplus H_2$. Let $\cA_i$ be an algorithm for the problem \dlhoma{H_i} with running time $f(H_i,k)\cdot x^c$, $i \in \{1,2\}$, where $x$ is the size of the input (and $c$ is a sufficiently large constant). Then there is an algorithm for FS-FC($H$) with running time $f(H,k)\cdot x^c$ (where $f(H,k)$ is defined in the proof).
\end{lemma}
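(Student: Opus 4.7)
Since each vertex $v \in V(G)$ has a fixed-side-fixed-component list $L(v)$ contained in a single component of $H = H_1 \uplus H_2$, we obtain a natural partition $V(G) = V_1 \cup V_2$ where $V_i := \{v : L(v) \subseteq V(H_i)\}$. Let $B$ denote the bipartite subgraph of $G$ consisting of all edges with one endpoint in $V_1$ and one in $V_2$. Since $H_1$ and $H_2$ lie in distinct components of $H$, no edge of $B$ can map to an edge of $H$, so any deletion set $W$ that leaves behind a list homomorphism must contain a vertex cover of $B$. Writing $W_i := W \cap V_i$, the residual graphs $G[V_i \setminus W_i]$ must then each admit a list homomorphism to $H_i$; crucially, these two residual problems are valid \dlhoma{H_i} instances (the list of every surviving vertex of $V_i$ is already contained in $V(H_i)$), so they can be fed to $\cA_i$.

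The plan is a depth-$k$ branching that kills all cross edges. While $B$ still contains some surviving edge $\{u,v\}$ with $u \in V_1$ and $v \in V_2$, I branch into two subinstances, one placing $u$ into the deletion set and one placing $v$ into it, each with budget reduced by one. The budget strictly decreases at every branching step, so the branching tree has depth at most $k$ and at most $2^k$ leaves. At each leaf there are no cross edges between the surviving parts $V_1'$ and $V_2'$, so the two residual problems on $G[V_1']$ (with target $H_1$) and $G[V_2']$ (with target $H_2$) are completely independent. Letting $r$ be the budget that remains at the leaf, I call $\cA_1$ and $\cA_2$ with budgets $0, 1, \dots, r$ to compute the minimum numbers $m_1, m_2$ of further deletions each side requires, and declare the leaf feasible iff $m_1 + m_2 \le r$. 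The overall output is YES iff some leaf is feasible.

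Correctness is immediate: every cross edge of $B$ must be covered by $W$, which justifies the branching rule; and once no cross edges survive, a deletion on one side cannot affect feasibility on the other, which justifies the leaf check. For the running time, each leaf performs $O(r) \le O(k)$ calls to each $\cA_i$, costing $O(k \cdot (f(H_1,k) + f(H_2,k)) \cdot x^c)$; multiplying by the at most $2^k$ leaves and setting $f(H,k) := c' \cdot 2^k \cdot k \cdot (f(H_1,k) + f(H_2,k))$ for a suitable constant $c'$ yields the required bound $f(H,k) \cdot x^c$. There is no real obstacle here; the only subtlety worth noting is that the lists passed to $\cA_i$ at a leaf are genuinely contained in $V(H_i)$, which is built into the definition of the partition $V_1 \cup V_2$ and is preserved under vertex deletion.
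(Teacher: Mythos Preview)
Your proof is correct, but it takes a more cautious route than the paper. The paper simply observes that in an FS-FC($H$) instance each connected component $C_i$ of $G$ already has all its lists inside a single $H_j$ (this is how the FS-FC instances produced by the reduction chain in Section~\ref{no_conflict_section} look: the restricted lists $L'$ are chosen component-by-component in $G$). Consequently there are no ``cross edges'' at all, and the paper just runs $\cA_j$ on each $C_i$ (with budgets $0,1,\dots,k-1$) to find the minimum number $d(C_i)$ of deletions needed, then accepts iff $\sum_i d(C_i)\le k$. This yields $f(H,k)=k\cdot\max\{f(H_1,k),f(H_2,k)\}+1$, avoiding the $2^k$ branching factor entirely.

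Your argument does not rely on that structural property: you allow edges between $V_1$ and $V_2$ and kill them by depth-$k$ branching before calling $\cA_1,\cA_2$ on the two sides. This is precisely the ``bad edge'' branching the paper deploys in the \emph{next} lemma (the $H=H_1\oslash H_2$ case), so you have essentially merged the two induction steps into one pattern. The upside of your version is robustness: it works for the FS-FC problem exactly as formally stated, without appealing to how the instance was produced upstream. The downside is the extra $2^k$ factor and a slightly longer argument. Both proofs are fine; the paper's is shorter because it exploits an invariant of the instances it actually encounters.
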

\begin{proof}
Let the components of the input graph $G$ be $C_1,\dots,C_n$. For each $C_i$, $i \in [n]$, there is a $j \in \{1,2\}$ such that every vertex of $C_i$ has a list that is a subset of $V(H_j)$ (recall the definition of the FS-FC($H$) problem). We run the algorithm $\cA_j$ at most $k$ times to determine the smallest number $d(C_i)$ such that $d(C_i)$ vertices must be removed from $G[C_i]$ so that it has a list homomorphism to $H$. If $\sum_{i=1}^n d(C_i) > k$ then we reject. Otherwise we accept. The correctness is trivial.

\textbf{Running time.} Assume without loss of generality that $f(H_1,k) \geq f(H_2,k)$. The running time of the algorithm is at most $\sum_{i=1}^n k \cdot f(H_1,k) \cdot |C_i|^c + |G|^d \leq k \cdot f(H_1,k) \cdot |G|^c + |G|^d$, where $|G|^d$ accounts for the overhead calculations (e.g.\ computing the connected components of $G$ and feeding these components to $\cA_1$ or $\cA_2$). The constant $d$ is independent of $k$, so we can assume that $c \geq d$, and set $f(H,k) = k \cdot f(H_1,k) + 1$.
\qed \end{proof}

\begin{lemma}
  Assume that $H = H_1 \oslash H_2$. Let $\cA_i$ be an algorithm for the problem \dlhoma{H_i} with running time $f(H_i,k)\cdot x^c$, $i \in \{1,2\}$, where $x$ is the size of the input (and $c$ is a sufficiently large constant). Then there is an algorithm for FS-FC($H$) with running time $f(H,k)\cdot x^c$ (where $f(H,k)$ is defined in the proof).
\end{lemma}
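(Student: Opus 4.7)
The plan is to reduce an \textsc{FS-FC}$(H)$ instance to a bounded number of independent \dlhoma{H_1} and \dlhoma{H_2} instances, exploiting the key structural fact about $H=H_1\oslash H_2$: every vertex of $T_1$ is adjacent in $H$ to every vertex of $B_2$, so any edge of $G$ whose two endpoints can be coloured from $T_1$ and $B_2$ respectively is satisfied automatically. Call the (unique) component of $H$ that contains $T_1\cup B_2$ the \emph{big} component; its two sides lie in $T_1\cup T_2$ and $B_1\cup B_2$, and every other component of $H$ is an isolated vertex of $B_1$ or of $T_2$. Combined with the fixed-side fixed-component property, each $v\in V(G)$ either has a \emph{pure} list $L(v)\subseteq X$ for a unique $X\in\{T_1,B_1,T_2,B_2\}$, is \emph{top-mixed} (with $L(v)\subseteq T_1\cup T_2$ meeting both) or is \emph{bottom-mixed} (with $L(v)\subseteq B_1\cup B_2$ meeting both). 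I call an edge $\{u,v\}$ of $G$ \emph{bad} if no pair $(a,b)\in E(H)$ satisfies $a\in L(u)$ and $b\in L(v)$; any solution $W$ must contain an endpoint of every bad edge. I branch on bad edges: while one exists and the residual budget is positive, guess which endpoint joins $W$ and recurse with budget $k-1$. This produces at most $2^k$ leaves, and at each leaf the current graph $G'$ with residual budget $k'\ge 0$ either has no bad edge or the branch is pruned.

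At each surviving leaf I tighten lists by replacing $L(v)$ with $L(v)\cap T_1$ for every top-mixed $v$ and with $L(v)\cap B_2$ for every bottom-mixed $v$. The key tightening lemma is that this is feasibility-preserving: starting from any list homomorphism $\phi$ on $(G'\setminus W, L\setminus W)$, one can always produce a valid $\phi'$ by sending each top-mixed $v$ with $\phi(v)\in T_2$ to an arbitrary element of $L(v)\cap T_1$ (and symmetrically for bottom-mixed vertices), because every neighbour of such a $v$ was forced by $\phi$ into $B_2$ (the only $H$-neighbours of $T_2$), and $T_1\times B_2\subseteq E(H)$. The move could break a neighbour only if two vertices on the same side of the big component of $H$ are adjacent in $G'$, but any such edge is bad and has already been removed by the branching. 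After tightening, every list lies entirely in $V(H_1)$ or in $V(H_2)$, inducing a partition $V(G')=V_1\cup V_2$; every surviving edge between $V_1$ and $V_2$ goes from a $V_1$-vertex with list in $T_1$ to a $V_2$-vertex with list in $B_2$, and such cross edges are satisfied regardless of which concrete images are chosen on either side.

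Consequently, a solution of size at most $k'$ for $(G',L,k')$ exists iff there are sets $W_i\subseteq V_i$ with $|W_1|+|W_2|\le k'$ such that $\cA_i$ accepts $(G'[V_i], L|_{V_i}, |W_i|)$ for both $i\in\{1,2\}$, each direction following by combining or restricting the two subhomomorphisms along the biclique structure of the cross edges. At every leaf I invoke $\cA_i$ with each budget $0,1,\dots,k'$ to find the smallest accepting $k_i^*$ and answer YES iff $k_1^*+k_2^*\le k'$. Summed over at most $2^k$ leaves this gives total running time $O(2^k(k+1)(f(H_1,k)+f(H_2,k))\cdot x^c)$, matching the desired $f(H,k)\cdot x^c$ bound for a suitable $f(H,k)$. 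The hardest step is the tightening lemma, which is where the skew-sum structure is genuinely used: pushing every mixed vertex onto the ``flexible'' side ($T_1$ or $B_2$) is safe only because the bad-edge branching has already eliminated the same-side adjacencies of the big component in which the move could interfere with a neighbour, and because $T_1$ and $B_2$ are joined by a complete bipartite graph in $H$.
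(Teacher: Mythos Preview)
Your proof is correct and follows essentially the same approach as the paper: trim the mixed lists to the ``dominating'' side ($T_1$ on top, $B_2$ on bottom), branch on edges that force a deletion, observe that the $T_1$--$B_2$ biclique makes all remaining cross edges automatically satisfied, and then solve the two sides independently via $\cA_1$ and $\cA_2$.

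The only notable difference is the order of operations. The paper tightens the lists \emph{first} (justified by the neighbourhood inclusions $N_H(x_2)\subseteq N_H(x_1)$ for $x_1\in T_1$, $x_2\in T_2$, and dually on the bottom), and only then branches on the edges that are still unsatisfiable, which after tightening are precisely the pure $B_1$--$T_2$ edges. You branch first on your more general notion of bad edge and tighten afterwards. Both orders are valid; in fact your tightening lemma does not actually depend on the prior branching (if $\phi(v)\in T_2$ then every neighbour of $v$ in $G'\setminus W$ is already mapped into $B_2$ by $\phi$, so moving $v$ into $T_1$ is always safe), so the sentence about same-side adjacencies being ``removed by the branching'' is a harmless red herring. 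The paper's order has the minor advantage that the branching step becomes conceptually cleaner (only one edge type to consider) and delegates all $H_i$-internal incompatibilities to $\cA_i$ rather than sometimes handling them in the outer branching; your order has the minor advantage of also covering same-side edges of $G$ explicitly, which the paper's reduction chain guarantees do not occur but which the bare \textsc{FS-FC}$(H)$ definition does not forbid.
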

\begin{proof}
Assume that the bipartite classes of $H_i$ are $T_i, B_i$, $i \in \{1,2\}$. For any $u \in V(G)$ such that $L(u) \subseteq T_1 \cup T_2$ and $L(u) \cap T_1 \neq \emptyset$, we trim $L(u)$ as $L(u) \leftarrow L(u) \cap T_1$. Similarly, for every $v \in V(G)$ such that $L(v) \subseteq B_1 \cup B_2$ and $L(v) \cap B_2 \neq \emptyset$, we trim $L(v)$ as $L(v) \leftarrow L(v) \cap B_2$. Because for any $x_1 \in T_1$ and any $x_2 \in T_2$ it holds that $N(x_1) \supseteq N(x_2)$, and for any $y_1 \in B_1$ and any $y_2 \in B_2$ it holds that $N(y_2) \supseteq N(y_1)$, it is easy to see that reducing the lists this way does not change the solution space.

If $\{u,v\}$ is an edge such that $L(u) \subseteq B_1$ and $L(v) \subseteq T_2$, we call $\{u,v\}$ a \emph{bad edge}. Clearly, we must remove at least one endpoint of a bad edge. We branch on which endpoint of a bad edge to remove until either there are no more bad edges, or we exceed the budget $k$, in which case we abort the current computation branch. Hence, from now on we can assume that there are no bad edges.

Recall that there is a bipartite clique on $T_1$ and $B_2$. This has the consequence that if $\{u,v\}$ is an edge of $G$ such that $L(u) \subseteq T_1$ and $L(v) \subseteq B_2$, then no matter to which element of $L(u)$ the vertex $u$ is mapped, and no matter to which element of $L(v)$ the vertex $v$ is mapped, the edge $\{u,v\}$ is always mapped to an edge of $H$. Therefore we can simply remove these edges from $G$ without changing the solution space. Let $G^\star$ be this modified version of $G$. Observe now that for any connected component $C$ of $G^\star$, for any edge $\{u,v\} \in E(C)$, we have that either $L(u) \subseteq T_1$ and $L(v) \subseteq B_1$, or $L(u) \subseteq T_2$ and $L(v) \subseteq B_2$. That is, no edge can be mapped to the edges between $T_1$ and $B_2$, and therefore without loss of generality, we replace the target graph $H_1 \oslash H_2$ with $H_1 \uplus H_2$. In conclusion, it is sufficient to solve the problem FC-FS($H_1 \uplus H_2$) for $G^\star$ (which we already did in Lemma~\ref{lem:disjoint_union}) and use the obtained solution as a solution for the instance $G$ of FC-FS($H_1 \oslash H_2$) for $G$.

\textbf{Running time.} For every bad edge, we branch on removing one of its endpoints. This needs time $2^k \cdot |G|^{d'}$ for some $d'$, independent of $k$ ($|G|^{d'}$ accounts for finding a bad edge and removing one of its endpoints). If a computation branch ends after removing $k$ vertices and we still have bad edges, that computation branch is terminated. Otherwise we obtain $G^\star$ from $G$ using time $|G|^{d''}$ (this involves trimming the lists, removing edges, etc.), where $d''$ is independent of $k$. To solve FC-FS($H_1 \uplus H_2$) for the instance $G^\star$, we use the same algorithm and the same analysis as in Lemma~\ref{lem:disjoint_union} (with the same notation and assumptions). Noting that $|G^\star| \leq |G|$, this can be done in time $k \cdot f(H_1,k) \cdot |G|^c + |G|^d$. We can assume that $c \geq d,d',d''$, so overall the algorithm runs in
\[2^k \cdot (|G|^{d'} + |G|^{d''} + (k \cdot f(H_1,k) \cdot |G|^c + |G|^d)) \leq 2^k(3 + k \cdot f(H_1,k)) \cdot |G|^c\]
time.
\qed \end{proof}

\section{Relation between DL-Hom(H) and Satisfiability Problems}

The purpose of this section is to prove Theorem~\ref{thml:eqv}: the equivalence of \dlhom with the Clause Deletion $\ell$-Chain SAT ($\ell$-CDCS) problem (defined below), in the cases when \lhom is characterized as polynomial-time solvable by Feder et al.~\cite{Feder/et_al:99:LHC},
that is, when $H$ is a bipartite graph whose complement is a circular arc graph.  This satisfiability problem belongs to the
family of clause deletion problems (e.g., Almost 2-SAT \cite{almost2sat-1,almost2sat-3,almost2sat-2}), where the goal is to
make a formula satisfiable by the deletion of at most $k$ clauses.

\begin{definition}\label{def:CS-formula}
A \emph{chain clause} is a conjunction of the form \[(x_0 \rightarrow x_1) \wedge (x_1 \rightarrow x_2) \wedge \cdots \wedge (x_{m-1} \rightarrow x_m),\]
where $x_i$ and $x_j$ are different variables if $i \neq j$. The \emph{length} of a chain clause is the number of variables it contains. (A chain clause of length $1$ is a variable, and it is satisfied by both possible assignments.)
To simplify notation, we denote chain clauses of the above form as
\[x_0 \rightarrow x_1 \rightarrow \cdots \rightarrow x_m.\]

An $\ell$-Chain-SAT formula consists of:
  \begin{itemize}
  \item a set of variables $V$;
  \item a set of chain clauses over $V$ such that any chain clause has length at most $\ell$;
  \item a set of unary clauses (a unary clause is a variable or its negation).
  \end{itemize}
\end{definition}

\begin{center}
\noindent\framebox{\begin{minipage}{0.98\textwidth}
\textbf{\textsc{Clause Deletion $\ell$-Chain-SAT} \hspace{5mm}($\ell$-CDCS)}\\
\emph{Input }: An $\ell$-Chain-SAT formula $F$.\\
\emph{Parameter }: $k$\\
\emph{Question} : Does there exist a set of clauses of size at most $k$ such that removing these clauses from $F$ makes $F$
satisfiable?
\end{minipage}}
\end{center}

\subsection{The variable-deletion version}

For technical reasons, it will be convenient to work with a variant of the problem where instead of constraints, certain sets of variables are allowed to be removed, a certain disjointness condition is required, and chain clauses of length $2$ behave differently from chain clauses having length $1$ or length at least $3$:
\begin{center}
\noindent\framebox{\begin{minipage}{0.98\textwidth}
\textbf{\textsc{Variable Deletion $\ell$-Chain-SAT}\hspace{5mm}($\ell$-VDCS)}\\
\emph{Input }: An $\ell$-Chain-SAT instance $F$ in which chain clauses of length other than $2$ are on disjoint sets of variables. Furthermore, any variable of $F$ must appear in some chain clause of length different from $2$, and for any chain clause $x \rightarrow y$ of length $2$, the variables $x$ and $y$ cannot both appear in any chain clause of length at least $3$.\\
\emph{Parameter }: $k$\\ \emph{Question} : Does there exist a set of chain clauses of size at most $k$ but not containing any chain clause of length $2$ in $F$ such that removing all variables of these chain clauses, and also removing any clause that contains any of these variables, makes $F$ satisfiable?
\end{minipage}}
\end{center}

The following two lemmas show the equivalence of the two versions of the problem. Note
that the first reduction increases the value of $\ell$, but the
equivalence holds in the sense that $\ell$-CDCS is FPT for every fixed
$\ell$ if and only if $\ell$-VDCS is FPT for every fixed $\ell$.

To simplify the exposition of the proofs that follow, we introduce some terminology. In the context of the CDCS problem, we say that a clause is \emph{undeletable} if it has at least $k+1$ identical copies in the given formula, where $k$ is the maximum number of clauses allowed to be removed. By ``adding an undeletable clause'' we mean adding $k+1$ copies of the given clause, and by ``making a clause undeletable'' we mean adding sufficiently many copies of that clause so that it becomes undeletable. Furthermore, sometimes we refer to chain clauses of length $2$ as \emph{implicational} clauses, and chain clauses of length different from $2$ as \emph{ordinary} clauses.

\begin{lemma}\label{variable->constraint}
There is a parameterized reduction from $\ell$-VDCS to $(2 \ell + 2)$-CDCS.
\end{lemma}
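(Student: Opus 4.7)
The plan is to reduce an $\ell$-VDCS instance $F$ with parameter $k$ to a $(2\ell+2)$-CDCS instance $F'$ by combining two standard ideas. First, I would apply the duplication trick to every unary clause and every length-$2$ (implicational) chain clause of $F$: each is replaced in $F'$ by $k+1$ syntactic copies, so that any CDCS solution of $F'$ (which deletes at most $k$ clauses) leaves every such clause intact. Second, for each ordinary chain $C_i$ of $F$ I would introduce a single deletable ``master'' chain $\hat C_i$ in $F'$ whose deletion is intended to simulate the variable-deletion effect of removing $C_i$ in VDCS.

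Concretely, for $C_i\colon x_0 \to x_1 \to \cdots \to x_m$ with $m+1 \le \ell$, I would introduce fresh auxiliary variables $a_0^{(i)}, a_1^{(i)}, \ldots$ and build $\hat C_i$ as an interleaved chain over the $x_p$'s and these auxiliaries, of length at most $2\ell+2$. The unary and implicational clauses of $F$ that touch $C_i$ are then rewritten in $F'$ to reference the auxiliary variables rather than the $x_p$'s directly. The chain $\hat C_i$ is designed so that while it is intact, its implications tightly couple each auxiliary's value to the adjacent $x_p$'s, so the translated clauses exactly re-enforce the originals; when $\hat C_i$ is deleted, the auxiliaries decouple from the $x_p$'s and can be freely reassigned to satisfy every translated clause touching them, mirroring the removal of those clauses when $C_i$ is deleted in VDCS. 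The disjointness condition on ordinary chains in $\ell$-VDCS ensures that each auxiliary variable belongs to a unique master chain, so these reassignments are noninterfering across distinct deleted chains.

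The forward direction of the equivalence is then routine: given a VDCS-solution $(S,\alpha)$ of $F$, delete $\{\hat C_i : C_i \in S\}$ in $F'$, use the chain implications to extend $\alpha$ to the auxiliaries of each intact $\hat C_i$, and exploit the decoupling to set the auxiliaries of each deleted $\hat C_i$ to trivializing values. I expect the main obstacle to lie in the backward direction. The duplication trick forces any CDCS deletion set to be a subset of $\{\hat C_i\}_i$, inducing a candidate VDCS deletion set $S$, and the delicate step is to show that the restriction of the CDCS satisfying assignment to the $x$-variables outside $\bigcup_{C_i \in S} C_i$ satisfies the reduced VDCS formula $F \setminus S$. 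For this one must verify that the undeletable translated clauses on the decoupled auxiliaries of the deleted chains can always be simultaneously satisfied without propagating any spurious constraint back onto the surviving $x$-variables. This is the heart of the proof and is exactly what the careful interleaving pattern of $\hat C_i$ is designed to guarantee, using up the chain-length slack $2\ell+2$ to provide enough separate auxiliary positions to absorb every unary and implicational constraint that touches a single ordinary chain.
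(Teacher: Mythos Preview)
Your proposal follows essentially the same approach as the paper: make the unary and implicational clauses undeletable via duplication, replace each ordinary chain by a single deletable ``master'' chain with interleaved fresh auxiliary variables, and rewrite the undeletable clauses to reference these auxiliaries so that deleting the master chain frees the auxiliaries to satisfy all clauses that formerly touched it. One point you leave implicit but the paper handles explicitly: to fit the master chain within $2\ell+2$ variables, the paper first preprocesses each ordinary chain so that its only surviving unary constraints are $\neg x_0$ and $x_m$ (by collapsing forced variables and isolating intrinsically unsatisfiable chains), which means only two dedicated auxiliary positions $\tilde{x}_0,\tilde{x}_m$ are needed for unaries, with one further auxiliary $x_i'$ per variable handling outgoing implications and the original $x_i$ absorbing incoming ones; without such a normalization your claim that the $2\ell+2$ slack suffices ``to absorb every unary and implicational constraint'' would need more justification.
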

\begin{proof}
Let $F$ be a given $\ell$-VDCS instance. Before we construct a CDCS instance with parameter $k$, we transform $F$ into a more
standard form. First we obtain $F_1$ from $F$ as follows. Let $x_0 \rightarrow \cdots \rightarrow x_m$ be an ordinary clause of $F$ with the property that there exist indices $0 \leq i \leq j \leq m$ such that $F$ contains the unary clauses
$x_i$ and $\neg x_j$. The variables of such a clause must be removed in any solution, so we remove all variables $x_0, \dots, x_m$ from $F$ (and any clauses that contain any of these variables), and add a new chain clause $x'$ of length $1$ (where $x'$ is a new variable), together with unary clauses $x'$ and $\neg x'$. $F_1$ is clearly equivalent to $F$.

We produce a formula $F_2$ from $F_1$ as follows. We mark all the ordinary clauses of $F_1$. Let $C = x_0 \rightarrow \cdots \rightarrow x_m$ be an ordinary clause such that there are indices $j < i$ such that $F_1$ contains unary clauses $x_i$ and $\neg x_j$. Take the largest index $i$ such that $F_1$ contains the unary clause $\neg x_i$, and the smallest index $j$ such that $F_1$ contains the unary clause $x_j$. Observe that in any satisfying variable assignment of this clause, for any $i' < i$, the variable $x_{i'}$ must take on the value $0$. Therefore we collapse all variables $x_{i'}$, $i' < i$, into the variable $x_i$. Using a similar reasoning, we can collapse all variables $x_{j'}$, $j < j'$  into $x_j$. Note that the previous manipulation could convert a ordinary clause into a clause of length $2$, i.e., an implicational clause. The reason we marked the ordinary clauses of $F_1$ is that even in $F_2$, we still wish to treat these marked clauses of length $2$ as ordinary clauses, i.e., we need to remember that these clauses are coming from clauses in $F$ whose variables could be removed from $F$.

Notice that if $F_2$ is an instance of the ``modified VDCS problem'' in which the clause for which the variables are allowed to be removed are the marked clauses, then $F_2$ can be made satisfiable by $k$ deletions if and only if $F_1$ can be made satisfiable by $k$ deletions, where $F_1$ is considered as an ordinary VDCS instance. We are now ready to construct the $(2 \ell + 2)$-CDCS instance $F'$ from $F_2$.

For each marked $x_0 \rightarrow x_1 \rightarrow \cdots \rightarrow x_m$
of $F_2$, we place a chain clause \[x_0 \rightarrow x_0'
\rightarrow \tilde{x}_0 \rightarrow x_1 \rightarrow x_1' \rightarrow x_2 \rightarrow x_2' \rightarrow \cdots \rightarrow x_{m-1} \rightarrow x_{m-1}' \rightarrow \tilde{x}_m \rightarrow x_m \rightarrow x_m'\] into $F'$.

If $\neg x_0$ is a unary clause in $F_2$, then we add the
unary clause $\neg \tilde{x}_0$ and make it undeletable. Similarly, if $x_m$ is a unary clause in
$F_2$, then we add the unary clause $\tilde{x}_m$ to $F'$, and make it undeletable.

Each unmarked chain clause $x_i \rightarrow y_j$ in $F_2$ yields an undeletable implicational clause $\alpha \rightarrow \beta$ in $F'$, where we define $\alpha$ and $\beta$ as follows. Let $y_0 \rightarrow y_1 \rightarrow \cdots \rightarrow y_n$ be the marked clause in $F_2$ that contains $y_j$. Then the corresponding clause in $F'$ is \[y_0 \rightarrow y_0'
\rightarrow \tilde{y}_0 \rightarrow y_1 \rightarrow y_1' \rightarrow y_2 \rightarrow y_2' \rightarrow \cdots \rightarrow y_{n-1} \rightarrow y_{n-1}' \rightarrow \tilde{y}_n \rightarrow y_n \rightarrow y_n'.\]
Then $\alpha = x_i'$ and $\beta = y_j$. It is easy to see that there is a deletion set for the modified VDCS problem instance $F_2$ of size $k$ if and only if there is a deletion set of size $k$ for the $(2\ell + 2)$-CDCS instance $F'$.
\qed \end{proof}

\begin{lemma}\label{constraint->variable}
There is a parameterized reduction from $\ell$-CDCS to $\ell$-VDCS.
\end{lemma}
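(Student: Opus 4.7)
The plan is to build, for each clause of $F$, a small ``gadget'' in $F'$ whose designated ordinary (non-implicational) chain clause can be deleted in the VDCS sense to simulate exactly the removal of that clause in $F$. For each chain clause $C=x_0\to\cdots\to x_m$ of $F$ of length $m+1\ge 2$ I would introduce fresh copies $y_0^C,\dots,y_m^C$ and add to $F'$ the ordinary chain $y_0^C\to\cdots\to y_m^C$, padding the length-$2$ case by a fresh middle variable $d^C$ to obtain the length-$3$ chain $y_0^C\to d^C\to y_1^C$. For each unary clause $\lambda$ on a variable $x$ of $F$, I introduce a fresh token $t^\lambda$, add the singleton ordinary clause $\{t^\lambda\}$, and add the unary clause $t^\lambda$ or $\neg t^\lambda$ encoding the sign of $\lambda$. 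Length-$1$ chain clauses of $F$ are trivially satisfied and discarded.

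The key design choice is how to identify the different copies of the same $F$-variable. For each variable $x$ of $F$ I form the set $\mathrm{Rep}(x)$ of all its representatives --- the copies $y_j^C$ over occurrences of $x$ in chain clauses of $F$, together with the tokens $t^\lambda$ over unary clauses on $x$ --- and, for every ordered pair of distinct $u,v\in\mathrm{Rep}(x)$, add to $F'$ the length-$2$ implication $u\to v$. This complete ``equivalence graph'' forces all surviving representatives of $x$ to agree in any assignment, while being robust to gadget deletion: removing the variables of one gadget only erases implications incident to those variables, so every remaining pair of representatives of $x$ is still directly linked.

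Verifying that $F'$ is a legal $\ell'$-VDCS instance (with $\ell'\le\max(\ell,3)$) should then be routine: all introduced variables are fresh, so the ordinary clauses are pairwise vertex-disjoint; every variable of $F'$ lies in the ordinary clause of its gadget; and the two endpoints of every length-$2$ equivalence implication lie in two different gadgets, hence never in the same chain clause of length $\ge 3$. Because the correspondence between ordinary clauses of $F'$ and non-trivial clauses of $F$ is bijective, the VDCS budget stays at $k$, and both directions of the equivalence follow from a direct calculation: a CDCS solution $(D,\sigma)$ of $F$ translates to the VDCS solution that deletes the gadgets of $D$ and extends $\sigma$ to the fresh variables by $y_j^C:=\sigma(x_j)$, $d^C:=\sigma(x_1)$, $t^\lambda:=\sigma(x)$; conversely, from a VDCS solution $(D',\tau)$ I recover the CDCS solution that deletes the $F$-clauses whose gadgets lie in $D'$ and assigns each variable $x$ the common value shared by its surviving representatives in $\tau$ (arbitrary if none survive).

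I expect the main obstacle to be the completeness direction, and this is why I would use the complete equivalence graph rather than a simpler ``hub'' construction with a single variable $h_x$ per $F$-variable. With a hub, deleting the singleton $\{h_x\}$ in VDCS would sever every constraint touching $x$ in a single step, whereas the corresponding operation in CDCS would require deleting every clause of $F$ that contains $x$ --- a blatant over-deletion that breaks the reduction whenever $x$ appears in many clauses. The complete equivalence graph removes this shortcut because the only way to make a representative of $x$ disappear from $F'$ is to delete the entire gadget housing it, which costs in VDCS exactly what the corresponding clause deletion costs in CDCS.
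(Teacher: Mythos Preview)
Your proposal is correct and follows essentially the same approach as the paper: make fresh per-clause copies of every variable, pad length-$2$ chain clauses to length $3$, attach a singleton ordinary chain to each unary clause, and bind all copies of the same original variable together with pairwise length-$2$ implications so that surviving copies must agree. The only cosmetic differences are that the paper appends the padding variable at the end of a length-$2$ chain ($x\to y\to w$) rather than inserting it in the middle, and that the paper states the ``copies must be identified'' step more tersely; both you and the paper note that for $\ell\in\{1,2\}$ one either accepts the blow-up to $3$-VDCS or solves the small cases directly.
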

\begin{proof}
Let $F$ be the $\ell$-CDCS instance. Observe that we can assume without loss of generality that $F$ contains no chain clauses of length $2$: if $x \rightarrow y$ is a chain clause of length $2$, then we introduce a new variable $w$ and replace $x \rightarrow y$ with the chain clause $x \rightarrow y \rightarrow w$. Clearly, this operation does not change the problem.

We produce the desired CDCS-formula $F'$ as follows. For each variable $x$, let $C_1, \dots, C_q$ be all the clauses (both unary and chain) that contain $x$. We make $q$ copies $x_1,\dots,x_q$ of $x$, and replace $x$ in $C_i$ with $x_i$, $i \in [q]$. For any $C_i$ that is a unary clause, we also add chain clause $x_i$ of length $1$.

Removing a chain clause in $F$ corresponds to removing the variables of the corresponding chain clause in $F'$. Removing a unary clause in $F$ also corresponds to removing the variable in the chain clause associated with that unary clause. The converse is equally easy.

Note that if all chain clauses have length at most $2$ then we would have a reduction from $1$-CDCS or $2$-CDCS to $3$-VDCS. We can avoid this blow-up by solving the $1$-CDCS or $2$-CDCS instance directly, which is not too hard.
\qed \end{proof}

\subsection{Reductions}
Bipartite graphs whose complement is a circular arc graph admit a simple representation (see \cite{FederHH03,Circular88}).
\begin{definition}
\label{defn-complement} The class of bipartite graphs whose complement is a circular arc graph corresponds to the class of graphs that can be represented as follows. Let $C$ be a circle, and $N$ and $S$ be two different points on $C$. A northern arc is an arc that contains $N$ but not $S$. A southern arc is an arc that contains $S$ but not $N$. Each vertex $v \in V(H)$ is represented by a northern or a southern arc $A_v$. The pair $\{u,v\}$ is an edge of $H$ if and only if the arcs $A_v$ and $A_u$ do \emph{not} intersect.
\end{definition}

First we reduce $\ell$-VDCS to \dlhom. In fact, we reduce it to the
special case FS-FC($H$) (making the statement somewhat stronger).

\begin{lemma}
\label{reduction_chain_sat_to_LHom} For every $\ell$, there is a bipartite graph $H_\ell$ whose complement is a circular arc graph such that there is a parameterized reduction
from $\ell$-VDCS to FS-FC($H_\ell$).
\end{lemma}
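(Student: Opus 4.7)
The plan is to construct, for each fixed $\ell$, a specific bipartite graph $H_\ell$ whose complement is a circular arc graph, together with a gadget-based reduction from $\ell$-VDCS to \textsc{FS-FC}($H_\ell$). I would take $H_\ell$ to be the ``staircase'' graph with top side $T=\{t_0,\ldots,t_\ell\}$ and bottom side $B=\{b_0,\ldots,b_\ell\}$, where $\{t_i,b_j\}\in E(H_\ell)$ iff $i+j\le\ell$. To verify that $\overline{H_\ell}$ is a circular arc graph in the sense of Definition~\ref{defn-complement}, I place each $t_i$ as a northern arc descending a fraction $i/\ell$ of the way toward $S$ on each side of the circle, and each $b_j$ as a southern arc descending $j/\ell$ toward $N$; then the arcs of $t_i$ and $b_j$ intersect iff $i+j>\ell$, matching exactly the edges of $\overline{H_\ell}$. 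The universal vertices $t_0$ and $b_0$ on their respective sides will play the role of ``free'' colors in the gadgets.

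The reduction constructs an \textsc{FS-FC}($H_\ell$) instance $(G,L,k)$ as follows. For each ordinary clause $C=x_0\to\cdots\to x_{m-1}$ (with $m+1$ satisfying assignments indexed by a threshold $s\in\{0,\ldots,m\}$), I add a \emph{clause gadget} consisting of a choice vertex $c_C$ with fixed-side fixed-component list $\{t_0,\ldots,t_m\}\subseteq T$ encoding the threshold, one variable vertex $v_{C,x_i}$ per variable, and short auxiliary paths connecting $c_C$ to each $v_{C,x_i}$ so that the coloring of $c_C$ forces $v_{C,x_i}$ into a specific ``true'' or ``false'' color matching whether $i\ge s$ or $i<s$. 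Importantly, the list of each $v_{C,x_i}$ also contains the universal color $b_0$ (or $t_0$), which is rendered inaccessible by the path constraints while $c_C$ is present but becomes available once $c_C$ is deleted. For each length-$2$ clause $x\to y$ with $x\in C$ and $y\in D$ (necessarily $C\ne D$ by the VDCS disjointness condition), I add a short connecting gadget between $v_{C,x}$ and $v_{D,y}$ whose admissible colorings encode exactly the three pairs $\{(0,0),(0,1),(1,1)\}$. Unary clauses become singleton-list restrictions on the corresponding variable vertex.

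Correctness then follows from a two-sided argument. Given a VDCS solution, delete the corresponding choice vertices $\{c_C : C \text{ deleted}\}$ in $G$; each variable vertex of a deleted clause can now take the universal color, which is adjacent to every vertex on the opposite side of $H_\ell$ and hence satisfies every surviving implicational-clause gadget, while the surviving clauses can be colored by the VDCS satisfying assignment. Conversely, an \textsc{FS-FC}($H_\ell$) solution $W$ can be normalized so that $W$ consists only of choice vertices: replacing any other deleted gadget vertex by the corresponding $c_C$ never increases $|W|$ and never breaks feasibility, because removing $c_C$ frees every variable vertex of $C$. This yields the desired VDCS deletion set. The main obstacle I anticipate is the clause-gadget construction itself: simultaneously (i) keeping the lists fixed-side fixed-component, (ii) making the $m+1$ colorings of $c_C$ bijectively realize the $m+1$ satisfying assignments of $C$, and (iii) ensuring that deleting $c_C$ genuinely frees every variable vertex while being no worse than any alternative intra-gadget deletion. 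Aligning these three requirements with the explicit adjacency $i+j\le\ell$ of $H_\ell$ is the delicate part; I would verify case by case that the staircase structure supports the required constraint propagation, enlarging $H_\ell$ with additional ``buffer'' colors (which does not affect being co-circular-arc) if the direct encoding proves too tight.
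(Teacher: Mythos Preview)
Your proposal has a concrete gap in the clause-gadget design. You want each variable vertex $v_{C,x_i}$ to carry the universal color $b_0$ (or $t_0$) in its list, ``rendered inaccessible by the path constraints while $c_C$ is present but available once $c_C$ is deleted.'' But in your staircase $H_\ell$, the vertex $b_0$ is adjacent to \emph{every} $t_j$ (and $t_0$ to every $b_j$). In any list-homomorphism instance, a vertex $v$ with $b_0\in L(v)$ can therefore \emph{always} be colored $b_0$: each neighbor of $v$ lies on the $T$-side, receives some $t_j$, and $\{b_0,t_j\}\in E(H_\ell)$ unconditionally. No path constraint emanating from $c_C$ can block a universal color. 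So you face a dilemma: if $b_0\in L(v_{C,x_i})$ then every variable vertex can escape to $b_0$ even with zero deletions, and your implicational-clause gadgets enforce nothing; if $b_0\notin L(v_{C,x_i})$ then deleting $c_C$ does not free $v_{C,x_i}$ from the implicational gadgets it sits in (those gadgets still demand a true/false value consistent with the other endpoint, and other surviving clauses may force the opposite value), so the forward direction of the reduction fails. The ``buffer colors'' escape hatch you mention does not help, because the obstruction is structural: any color universal enough to satisfy all neighboring gadgets after deletion is, by the same token, already available before deletion.

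The paper sidesteps this by dispensing with per-variable vertices altogether. Each ordinary clause $C$ is represented by a \emph{single} vertex $\alpha(C)$ whose color, chosen among $\ell+1$ dedicated ``value arcs'' $a_0,\dots,a_\ell$, directly encodes the threshold assignment of $C$. An implicational clause $x_r\to y_{r'}$ between $C$ and $C'$ becomes a length-four path $\alpha(C)\,\text{--}\,U(D)\,\text{--}\,V(D)\,\text{--}\,W(D)\,\text{--}\,\alpha(C')$ whose three inner vertices have two-element lists drawn from six auxiliary arcs $u_{r,r'}^{1,2},v_{r,r'}^{1,2},w_{r,r'}^{1,2}$ placed so that $\alpha(C)\mapsto a_i$ with $i\le r$ triggers a forced chain ending in $\alpha(C')\mapsto a_{i'}$ with $i'\le r'$, while $i>r$ leaves $\alpha(C')$ unconstrained; thus $H_\ell$ has $\Theta(\ell^2)$ vertices rather than $O(\ell)$. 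When $\alpha(C)$ is deleted the path is simply severed at one end, and the inner vertices can always be colored regardless of $\alpha(C')$---no universal ``escape'' color is needed. Undeletability of the inner vertices is enforced by the standard $k{+}1$ parallel-copies trick rather than by a post-hoc normalization, and unary clauses become list restrictions on $\alpha(C)$.
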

\begin{proof}
Let $F$ be any instance of $\ell$-VDCS. We construct in parallel a graph $H_\ell$ and an instance $G_F$ of \dlhom such that $F$ is
satisfiable after removing the variables of $k$ clauses if and only if $G_F$ maps to $H_\ell$ after removing $k$ vertices. We
will see that the construction of $H_\ell$ is independent of $F$ and depends only on $\ell$.

To define $H_\ell$, first fix a circle with two different points $N$ and $S$. For each ordinary clause $C$ of $F$, we introduce a
vertex $\alpha(C)$ in (the top partition of) $G_F$. There are at most $\ell+1$ satisfying assignments of the clause $C$, and
therefore we introduce $\ell + 1$ arcs $a_0,\dots,a_{\ell}$ in $H_\ell$ to encode all these possibilities. To define these arcs, we
place $\ell + 1$ points $p_0,\dots,p_{\ell - 1}, p_\ell$ on the semicircle from $S$ to $N$ in the clockwise direction
such that $p_0 \neq S$ and $p_\ell \neq N$. Similarly, we place $\ell+1$ points $q_0, q_1, \dots, q_\ell$ on the semicircle
from $N$ to $S$ in the clockwise direction such that $q_0 \neq N$ and $q_\ell \neq S$. The arc $a_i$ goes from $p_i$ to $q_i$
crossing $N$, $i \in \{0,\dots,\ell\}$. See Figure~\ref{chain_to_lhom}. We call these arcs as \emph{value} arcs.

Between any pair of ordinary clauses, there are at most $\ell^2$ possible implicational clauses. Therefore for all possible pairs
$(i,j)$, $0 \leq i,j \leq \ell - 1$ we introduce a set of $6$ arcs in $H_\ell$: $u_{i,j}^1, u_{i,j}^2, v_{i,j}^1, v_{i,j}^2,
w_{i,j}^1, w_{i,j}^2$. The role of these sets of arcs is to simulate the implicational clauses as follows. For each
implicational clause $D$ in $F$, let $C$ and $C'$ be the two (unique) ordinary clauses associated with it. For each such $C$, $C'$ and $D$,
the graph $G_F$ contains a path $ P = \alpha(C) - U(D) - V(D) - W(D) - \alpha(C')$. The clauses $C$ and $C'$ will determine
the lists of $\alpha(C)$ and $\alpha(C')$. The clause $D$ will determine $i$ and $j$, and $i$ and $j$ determine the lists of $U(D), V(D)$ and $W(D)$: $L(U(D)) = \{u_{i,j}^1, u_{i,j}^2\}$, $L(V(D)) = \{v_{i,j}^1, v_{i,j}^2\}$ and $L(W(D)) = \{w_{i,j}^1, w_{i,j}^2\}$.

Suppose that $C = x_0 \rightarrow \cdots \rightarrow x_t$, $C' = y_0 \rightarrow \cdots \rightarrow y_{t'}$, and $D = x_r
\rightarrow y_{r'}$. We set the list of $\alpha(C)$ to be $\{a_0,\dots,a_{t+1}\}$, and the list of $\alpha(C')$ to
$\{a_0,\dots,a_{t'+1}\}$. Finally, we define the lists of $U(D), V(D)$ and $W(D)$. We set the list of $U(D)$ to be
$\{u_{r,r'}^1, u_{r,r'}^2\}$, where $u_{r,r'}^1$ is an arc in $H_\ell$ that starts at $S$, and goes clockwise to include $p_r$ but
not $p_{r+1}$. The arc $u_{r,r'}^2$ starts at $S$ and goes anticlockwise to a point $e$ but it does not include $q_\ell$. Arc
$v_{r,r'}^2$ starts at $N$ and goes clockwise until it includes $e$ (but not $S$). Arc $v_{r,r'}^1$ starts at $N$, it goes
anticlockwise to a point $f$ that is (strictly) between $p_0$ and $S$. The arc $w_{r,r'}^1$ starts at $S$ and goes clockwise
until it includes $f$ but not $p_0$. Finally, $w_{r,r'}^2$ starts at $S$ and goes anticlockwise until includes $q_{r'+1}$ but
not $q_{r'}$.

Assume now that $\alpha(C)$ is mapped to a value arc $a_i$ such that $i \leq r$. Then $a_i$ intersects $u_{r,r'}^1$, so
$U(D)$ must be mapped to $u_{r,r'}^2$. The arc $u_{r,r'}^2$ intersects $v_{r,r'}^2$, so $V(D)$ must be mapped to $v_{r,r'}^1$
which in turn intersects $w_{r,r'}^1$, so $W(D)$ must be mapped to $w_{r,r'}^2$. The arc $w_{r,r'}^2$ intersects any value arc
$a_{i'}$ such that $i' > r'$, so $\alpha(C')$ must be mapped to an arc $a_{i'}$ such that $i' \leq r'$.

On the other hand, if $\alpha(C)$ is mapped to a value arc $a_i$ such that $i > r$, then the above ``chain reaction'' is not
triggered, so $\alpha(C')$ can be mapped to any vertex in its list. More precisely, $U(D)$ can be mapped to $u_{r,r'}^1$,
$V(D)$ can be mapped to $v_{r,r'}^2$, and $W(D)$ can be mapped to $w_{r,r'}^1$, which does not intersect any of the value
arcs, so $\alpha(C')$ can be mapped to anything in its list.

The above analysis suggests the following correspondence between variable assignments of $F$ and homomorphisms from $G_F$ to
$H_\ell$. Mapping $\alpha(C)$ to $a_i$ precisely corresponds to the assignment $x_0 = \dots = x_{i-1} = 0$, $x_i = \dots = x_q = 1$
of the variables of $C$. It is clear that using this correspondence, given a satisfying assignment we can construct a
homomorphism and vice versa. The unary clauses are encoded using the lists of the variables corresponding to ordinary clauses.
For example, if $x_i$ is a unary clause and $x_i$ is among the variables of the ordinary clause $C$, then in any valid variable
assignment we must have that $x_i = 1, x_{i+1} = 1, \dots, x_q = 1$. In our interpretation, this corresponds to restricting the
possible images of $\alpha(C)$ to $a_0, a_1, \dots, a_i$, so we simply remove the rest of the arcs form $L(\alpha(C))$.
Similarly, if $\neg x_i$ is a unary clause, then we must remove $a_j$ from $L(\alpha(C))$ for $j \leq i$.

We give a simple example. Assume that $C = x_0 \rightarrow x_1 \rightarrow x_2 \rightarrow x_3 \rightarrow x_4$ and $C' = y_0
\rightarrow y_1 \rightarrow y_2 \rightarrow y_3$ are ordinary clauses and $D = x_2 \rightarrow y_2$ is an implicational clause.
Then given a satisfying variable assignment such that $x_0 \vee x_1 \vee x_2$, e.g.\ $x_0 = 0$ and $x_1 = x_2 = 1$, we assign
$\alpha(C)$ to $a_1$. Because we were given a satisfying assignment, we must have that $y_2 = y_3 = 1$. So for example, if
$y_1$ is the first among $y_0,y_1,y_2$ that has value $1$, then we assign $\alpha(C')$ to $a_1$. We verify that this mapping
can be extended to the other $3$ vertices of the path between $\alpha(C)$ and $\alpha(C')$. Arc $a_1$ intersects $u_{2,2}^1$,
so $U(D)$ can be assigned (only) to $u_{2,2}^2$. Then $V(D)$ can be assigned (only) to $v_{2,2}^1$, and $W(D)$ (only) to
$w_{2,2}^2$, which intersects $a_3$. Therefore $\alpha(C')$ can be assigned to any of $\{a_0,a_1,a_2\}$ (but not to $a_3$ or
$a_4$), corresponding to the three possible ways the variables of $C'$ could be assigned. If $y_1$ is the first having value
$1$, then we assign $\alpha(C')$ to $a_1$.

On the other hand, if $x_0 \vee x_1 \vee x_2$ is false, then we assign $\alpha(C)$ to $a_i$ where $x_i$ is the first variable
in $C$ with value $1$, or if all the variables have value $0$, then $i = 5$. In all these cases, the first variable among
$y_0, y_1, y_2, y_3$ that has value $1$ could be any of these variables, or all variables could be assigned $0$. If the first
variable that has value $1$ is $y_j$, then we assign $\alpha(C')$ to $a_j$. If all variables are $0$, we assign $\alpha(C')$
to $a_4$. We can check that this mapping can be extended to the variables of the path between $\alpha(C)$ and $\alpha(C')$.

For the converse, $\alpha(C)$ and $\alpha(C')$ are prevented by the path between them to be assigned to value arcs that encode
a variable assignment violating the clauses $C$ and $C'$. If $\alpha(C)$ is mapped to $a_i$ where $i \leq 2$ (i.e.\
all those cases where $x_2$ has value $1$), then there is no homomorphism that maps $\alpha(C')$ to $a_3,a_4$ or $a_5$.

There is one more step in to complete the construction of $G_F$ because we want only vertices corresponding to ordinary clauses to be allowed to be removed. That is, we want to allow only vertices of the form $\alpha(C)$ or $\alpha(C')$ to be removed. To achieve this, for every path $\alpha(C) - U(D) - V(D) - W(D) - \alpha(C')$ we make the inner vertices ``undeletable'' as follows. We replace $U(D)$, $V(D)$, and $W(D)$ with $k+1$ copies. The copies inherit the lists. We add an edge between $\alpha(C)$ and any copy of $U(D)$, an edge between any copy of $U(D)$ and any copy of $V(D)$, an edge between any copy of $V(D)$ and any copy of $W(D)$, and an edge between any copy of $W(D)$ and $\alpha(C')$. This obviously works.

Now we check that the parameters are preserved, i.e., that there is a satisfying assignment of the formula $F$ after removing
the variables corresponding to $k$ ordinary clauses if and only if there is a homomorphism from $G_F$ to $H_\ell$ after removing $k$
vertices. Clearly, if after removing $k$
vertices there is a homomorphism from $G_F$ to $H_\ell$, then we can remove the ordinary clauses from the formula and use the
homomorphism to define a satisfying assignment.

Conversely, if there is a satisfying assignment after removing $k$ ordinary clauses, then we remove the corresponding vertices
from $G_F$ and define a homomorphism from the satisfying assignment. To do this, note that if either endpoint of a path $\alpha(C) - U(D) - V(D) - W(D) - \alpha(C')$ is removed, say $\alpha(C')$, then for any assignment of the remaining end vertex (e.g.\ $\alpha(C)$), we can find images for the vertices $U(D)$, $V(D)$ and $W(D)$ such that each edge of $\alpha(C) - U(D) - V(D) - W(D)$ is mapped to an edge of $H_\ell$. Clearly, this argument also works when we work with the copies of the inner vertices instead of the originals.

$G_F$ is obviously ``fixed side''. Since $H_\ell$ has a single component, $G_F$ is also ``fixed component''. (Observing that
$w_{r,r'}^1$ and $u_{r,r'}^2$ are connected to all the value arcs easily gives that $H_\ell$ is connected.)
\begin{figure}[h!tb]
  \centering
   \includegraphics[scale=0.85]{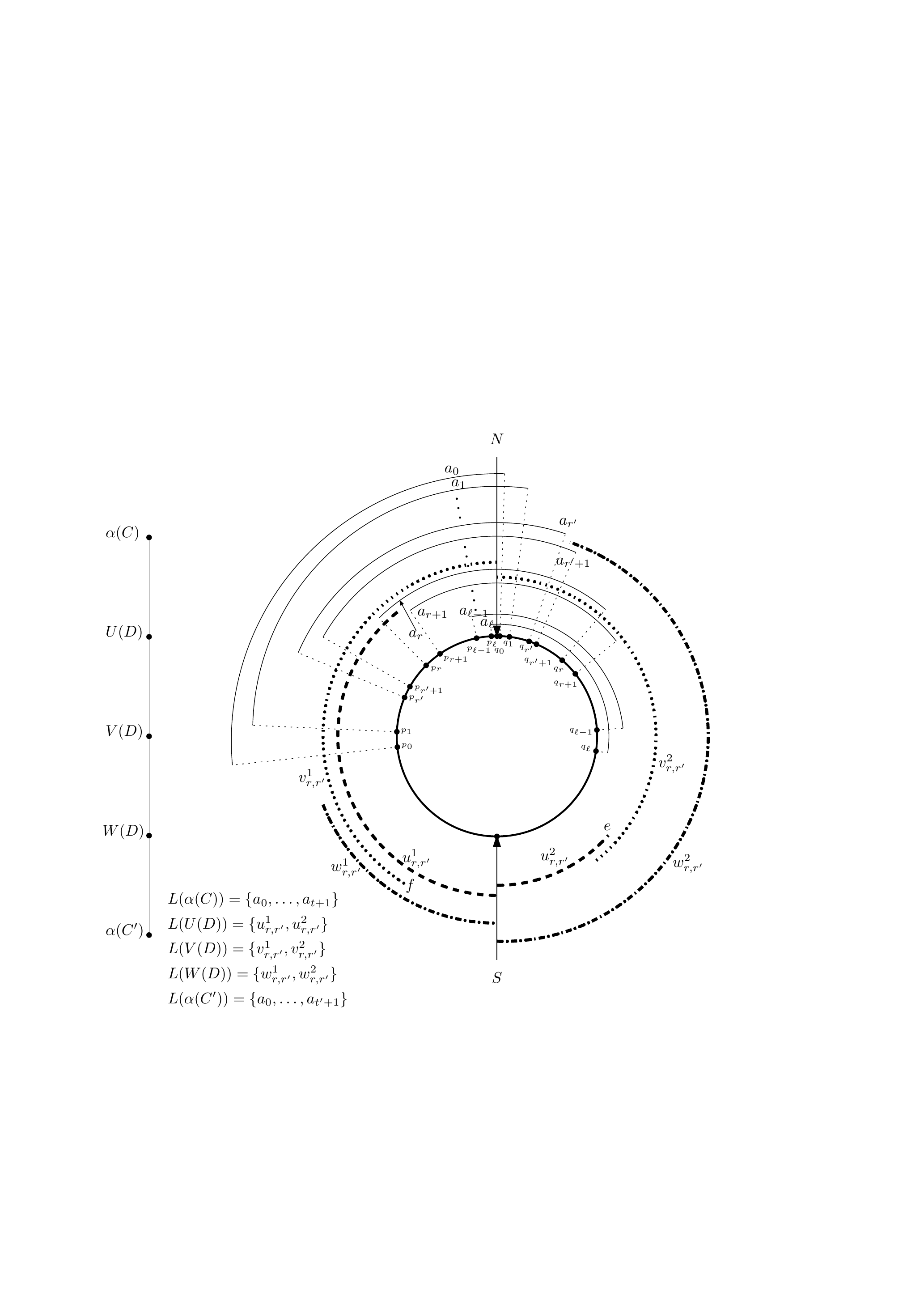}\\
  \caption{Construction of the graph $H_\ell$ and the gadgets in the proof of Lemma~\ref{reduction_chain_sat_to_LHom}.}\label{chain_to_lhom}
\end{figure}
\qed \end{proof}

For the converse direction, the following lemma reduces the special case FS($H$) to $\ell$-VDCS, where FS($H$) is the relaxation of the problem FS-FC($H$) where a list could contain vertices from more than one component of $H$. Note that the chain of
reductions in Section~\ref{sec:alg} works for any bipartite graph $H$ (not only for skew
decomposable bipartite graphs). Thus putting the two reductions together reduced a general instance of \dlhom to $\ell$-VDCS.

\begin{lemma}
Let $H$ be a bipartite graph whose complement is a circular arc graph. Then there is a parameterized reduction from FS($H$) to $\ell$-VDCS, where $\ell = |V(H)| + 1$.
\end{lemma}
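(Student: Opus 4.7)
The plan is to invert the construction of Lemma~\ref{reduction_chain_sat_to_LHom}, reading off a chain clause for each vertex of $G$ from the arc representation of $H$ given by Definition~\ref{defn-complement}. In that representation, each vertex of $H$ is a northern or a southern arc on a circle with distinguished poles $N$ and $S$, and two arcs are adjacent in $H$ iff they are disjoint. Because the input is an FS($H$) instance, each list $L(v)$ is contained in a single bipartition side of $H$, so the arcs in $L(v)$ are all of the same type (all northern or all southern) and admit a canonical linear order, for instance by the position of their clockwise-from-$N$ endpoint.

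For each vertex $v \in V(G)$, let $L(v) = \{A_v^1, \dots, A_v^{t_v}\}$ be the canonical enumeration, with $t_v \le |V(H)|$, and introduce a chain clause $x_v^1 \to x_v^2 \to \cdots \to x_v^{t_v+1}$ of length at most $|V(H)|+1 = \ell$, together with unary clauses that restrict the ``cut'' of the chain to its $t_v$ interior positions. The $t_v$ allowed satisfying assignments then correspond bijectively to the images $A_v^1, \dots, A_v^{t_v}$; removing this chain clause in VDCS (and with it all of $v$'s variables) will correspond precisely to deleting $v$ in the FS($H$) instance.

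Under the semantics ``$x_v^k = 1$ iff the cut of $v$'s chain is at position $\ge k$'', a single implicational clause $x_u^I \to x_w^J$ rules out exactly the ``upper-left rectangle'' $\{(i,j) : i \ge I,\, j < J\}$ of image pairs. The geometric content of the proof is then the claim that, with the right orderings, the set of forbidden pairs for each edge $\{u,w\} \in E(G)$ (pairs of intersecting arcs) decomposes as a union of $O(|V(H)|)$ upper-left rectangles, i.e., a staircase, so the edge constraint is expressible by $O(|V(H)|)$ implications. This staircase structure ultimately comes from the fact that two arcs of opposite type are non-adjacent iff their east-endpoint coordinates are out of order \emph{or} their west-endpoint coordinates are out of order, each of which is individually monotone after sorting.

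The main obstacle will be the two-dimensional (east/west) character of the arc-intersection condition: a single linear ordering of a list cannot simultaneously linearise both coordinates. The cleanest fix is to maintain two chain clauses per $v$---one ordered by east-endpoint, one by west-endpoint---tied together by implications that force both chains' cuts to correspond to the same arc of $L(v)$; the east-condition implications then live in the first chain and the west-condition implications in the second, so that each edge of $G$ produces only $O(|V(H)|)$ implicational clauses of length $2$. Once this encoding is in place, checking the syntactic requirements of VDCS (ordinary chain clauses on disjoint variable sets, chain lengths bounded by $\ell$, proper separation of implicational from ordinary clauses) and verifying that a size-$k$ deletion set on one side corresponds to a size-$k$ deletion set on the other is routine.
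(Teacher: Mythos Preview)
Your overall framework---one chain clause per vertex encoding the choice of image, with length-$2$ implications encoding the edge constraints---is exactly the paper's approach. The gap is in how you handle the ``two-dimensional'' difficulty.

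You are right that for a general family of northern arcs the two endpoint orders need not agree, and that this is an obstacle to the staircase decomposition. But the paper sidesteps this by first pruning each list: if $a,a'\in L(u)$ and $a\supseteq a'$, then $N_H(a)\subseteq N_H(a')$, so $a$ can be dropped from $L(u)$ without changing the problem. After this pruning, no arc in $L(u)$ contains another, and since all arcs in $L(u)$ pass through the same pole, the clockwise order of their left endpoints \emph{coincides} with the clockwise order of their right endpoints. Hence a single linear order on $L(u)$ simultaneously linearises both coordinates, and one chain clause per vertex suffices. The edge constraint then decomposes into two staircases (one per semicircle), each expressible with at most $|L(u)|$ implications against the \emph{same} chain.

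Your proposed fix---two chain clauses per vertex tied by implications---does not preserve the parameter. In $\ell$-VDCS, deleting a chain clause removes only \emph{its} variables; the second chain for $v$ and all west-side implications on it survive, so deleting vertex $v$ in the FS($H$) instance would require deleting \emph{both} chains, costing $2$ instead of $1$. Merging the two chains into one would instead double the chain length past $\ell=|V(H)|+1$. So the two-chain route cannot be made to work within the stated bounds; the containment-pruning step is the missing idea.
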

\begin{proof}
Let $G$ be an instance of the fixed side problem with bipartition classes $T$ and $B$, and assume we are given a
representation of $H$ as in Definition~\ref{defn-complement}, where the special points on the circle are $N$ and $S$. Let $u
\in T$. Clearly, we can assume that no arc $a \in L(u)$ contains any other arc in $L(u)$. Suppose that $t = |L(u)|$, and that
the arcs in $L(u)$ are $a_0, \dots, a_{t-1}$ (recall that these arcs contain $N$ but not $S$). Furthermore, let $p_i$ and
$q_i$ be points on the circle such that arc $a_i$ is the segment of the circle that begins at $p_i$, goes clockwise passing $N$, and ends at point $q_i$. By renaming the arcs if necessary, we can assume that when we traverse the circle in the
clockwise direction starting at $p_{t-1}$, we visit the endpoints of the arcs in $L(u)$ in the order $p_{t-1},p_{t-2}, \dots,
p_0, q_{t-1},q_{t-2},\dots,q_0$. See Figure~\ref{onion_graph} for an example. Similarly, let $v \in B$ and $t' = |L(v)|$. Let
the arcs in $L(v)$ be $b_0,\dots,b_{t'-1}$, and suppose that $r_i$ and $s_i$ are points on the circle such that the arc
obtained by going from $r_i$ to $s_i$ in the anticlockwise direction (thus traversing $S$) is precisely the arc $b_i$. Just as
before, we assume that traversing the circle in the anticlockwise direction starting at $r_0$, we visit the endpoints of the
arcs in $L(u)$ in the order $r_0,r_1, \dots, r_{t'-1}, s_0,s_1,\dots,s_{t'-1}$.
\begin{figure}[h!tb]
  \centering
  \includegraphics{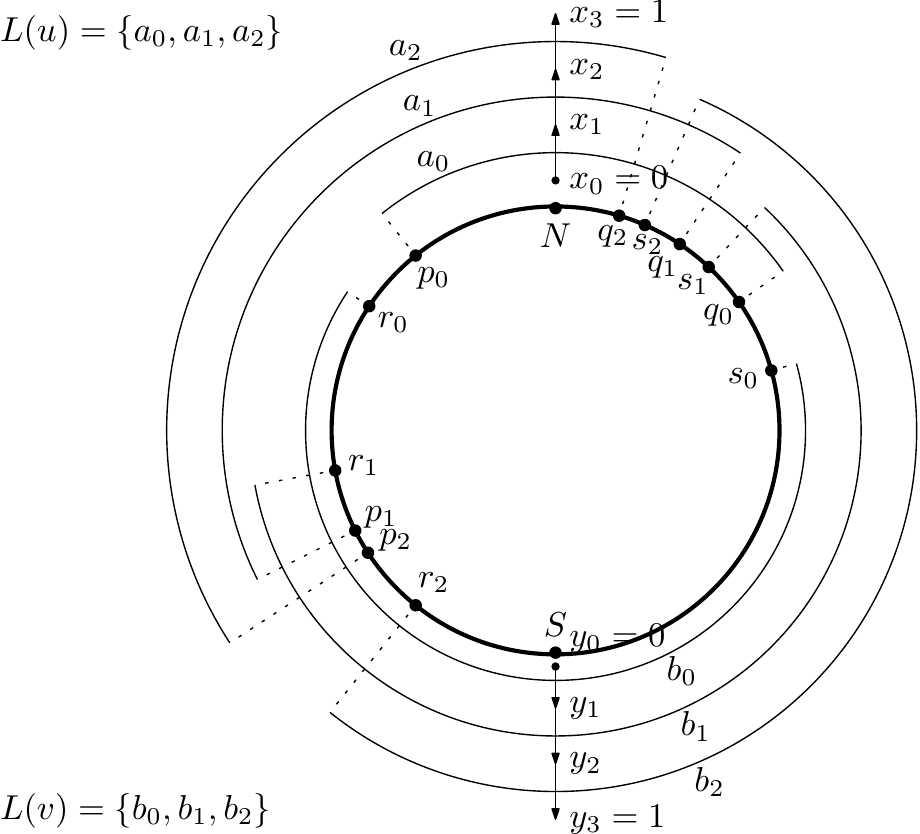}\\
  \caption{The graph $H$ in the proof of Lemma~\ref{onion_graph}.}\label{onion_graph}
\end{figure}

For a vertex $u \in T$, we introduce $t + 1$ ($t = |L(u)|$) variables $x_0,\dots,x_t$ in our VDCS instance. Arc $a_i$ is associated with the pair
$(x_i, x_{i+1})$. We add the chain clause $x_0 \rightarrow \cdots \rightarrow x_t$, and the unary clauses $\neg x_0$, $x_t$.
Notice that if these clauses are satisfied by a variable assignment, then there is a unique index $i$ such that $x_i = 0$ and
$x_{i+1} = 1$. Intuitively, this $0-1$ transition indicates that vertex $u$ of $G$ is assigned to $a_i$. Any vertex $v \in B$
is handled in a similar fashion, i.e.\ we introduce a chain clause $y_0 \rightarrow \dots \rightarrow y_{t'}$ and unary
clauses $\neg y_0$ and $y_{t'}$, where $t' = |L(v)|$. See Figure~\ref{onion_graph}. In the VDCS instance, we will need to allow all these chain clauses to be removed, and at this point, it is possible that some of these clauses have length $2$. To remedy this, for any chain clause $z_0 \rightarrow z_1$ of length $2$ introduced to represent a vertex of $G$, we add a new variable to the clause, i.e.\ $z_0 \rightarrow z_1$ is replaced with $z_0 \rightarrow z_1 \rightarrow w$, for a new variable $w$. Since $z_1$ must always be assigned $1$ in a satisfying assignment, $w$ must also always be assigned value $1$, so the presence of $w$ does not affect in our formula in any way, except that now $z_0 \rightarrow z_1 \rightarrow w$ is allowed to be removed according to the definition of VDCS. (Note that adding a variable to chain clauses of length $2$ to make them length $3$ could result in a $3$-VDCS instance, where $|V(H)|+1 = 2$. In this case, $|V(H)| = 1$, so we can just solve the problem directly.)

We define clauses to encode that edges of $G$ must be mapped to edges of $H$. Let $\{u,v\}$ be an edge of $G$. We interpret
$x_i = 1$ as $u$ being assigned to one of $a_0,\dots,a_{i-1}$, and $x_i = 0$ as $u$ being assigned to one of
$a_i,\dots,a_{t-1}$. Similarly for $v \in B$, $y_j = 1$ is interpreted as assigning $v$ to one of $b_0,\dots,b_{j-1}$, and
$y_j = 0$ as $v$ being assigned to $b_j,\dots,b_{t'-1}$.

We introduce two sets of implicational clauses, the first one consisting of $t$ clauses, and the second one consisting of $t'$
clauses. The role of the first set is to ensure that if $u$ is mapped to an arc $a_i$, then $v$ is mapped to an arc $b_j$ that
does not intersect $a_i$ on the arc from $N$ to $S$ in the clockwise direction. The role of the second set is to ensure that
if $v$ is mapped to an arc $b_j$, then $u$ is mapped to an arc $a_i$ that does not intersect $b_j$ on the arc from $N$ to $S$
in the anticlockwise direction. We define the first set only, as the second set is defined analogously.

The first set of implicational clauses are $x_i \rightarrow y_{j_i}$, one for each $i \in \{1,\dots,t\}$, where $j_i$ is
defined as follows. Let $j'$ be the largest integer such that we can get from $q_{i-1}$ to $s_{j'}$ on the circle in the
clockwise direction without crossing $S$. Then $j_i = j' + 1$. In Figure~\ref{onion_graph}, this corresponds to finding the
``outermost'' arc containing $S$ that ``still'' does not intersect $a_{i-1}$. For example, let $i=2$ in
Figure~\ref{onion_graph}. Then $j_i = 2$, and we obtain the implicational clause $x_2 \rightarrow y_2$. The rest of the
implicational clauses in the figure are $x_1 \rightarrow y_1$ and $x_3 \rightarrow y_3$. In the example in the figure, the
second set of implicational clauses are $y_1 \rightarrow x_1$, $y_2 \rightarrow y_1$, and $y_3 \rightarrow x_3$.

If we have a homomorphism $h$ from $G$ to $H$, we can construct a satisfying assignment for the VDCS formula by setting the
variables as suggested above. That is, if vertex $u$ is in the bipartite class $T$ of $G$ and $u$ is mapped to an arc $a_i(u)
\in L(u)$ then we set $x_j = 0$ for all $j \leq i$, and $x_j = 1$ for all $j > i$. We similarly set the values of the variables of any chain clause associated with a vertex in the bipartite class $B$ of $G$. Notice that this assignment automatically
satisfies all the chain clauses.

Consider an arbitrary implicational clause $x_i \rightarrow y_j$, where $x_i$ is a variable in a chain clause associated with
a vertex $u \in T$, and $y_j$ is a variable associated with a vertex $v \in B$. For the sake of contradiction, assume that $x_i = 1$ and $y_j = 0$. The way we assigned the values of the variables
indicates that $h(u) \in \{a_0,\dots,a_{i-1}\}$, and $v \in \{b_j,\dots,b_{t'}\}$. But because the clause $x_i \rightarrow
y_j$ exists, we know that $\{u, v\}$ is an edge of $G$, so we conclude from the definition of $x_i \rightarrow y_j$ that each
arc $a_0,\dots,a_{i-1}$ is intersected by the arc $b_j$, and also by each of $b_{j+1},\dots,b_{t'}$ because of the ordering of the
arcs. This contradicts the fact that $h$ is a homomorphism.

Conversely, assume that the VDCS formula is satisfied. Then for each vertex of $u \in T$, we find the (unique) index $i$ of the
chain clause associated with $u$ such that $x_i = 0$ and $x_{i+1} = 1$, and we define $h(u)$ to be $a_i$. We similarly define the images of vertices in $B$. To see that $h$ is a
homomorphism, assume for the sake of contradiction that an edge $\{u,v\}$ of $G$ is assigned to a non-edge of $H$. Let $a$ and
$b$ be the arcs to which $u$ and $v$ are assigned, respectively, and let $x_0,\dots,x_t$ and $y_0,\dots,y_{t'}$ be the
variables associated with $u$ and $v$, respectively. Then for some $i$ and $j$, we have that $x_i = 0, x_{i+1} = 1$ and $y_j =
0, y_{j+1}=1$, from our definition of mapping the vertices of $G$ to $V(H)$, we have that $a = a_i$ and $b = b_j$. Assume without loss of generality that $a_i$ and $b_j$ intersect on the semi-circle from $N$ to $S$ going in the clockwise direction. Find the smallest $j'$ such that $b_{j'}$ intersects $a_i$ on the semi-circle from $N$ to $S$ going in the clockwise direction. Then by definition, the implicational clause $x_{i+1} \rightarrow y_{j'}$ is present in the VDCS. Since the formula is satisfied, $y_{j'} = 1$, and because $j' \leq j$, therefore $y_j = 1$, a contradiction.

For the parameters, if there is a homomorphism from $G$ to $H$ after removing $k$ vertices $v_1,\dots,v_k$, then the
$\ell'$-VDCS formula obtained by removing the variables of the chain clauses associated with $v_1,\dots,v_k$ gives exactly
the formula obtained from $G[V(G) \setminus \{v_1,\dots,v_k\}]$ directly. The converse works similarly.
\qed \end{proof}
\section{Concluding Remarks}

The list homomorphism problem is a widely investigated problem in classical complexity theory. In this work, we initiated the study of this problem from the perspective of parameterized complexity: we have shown that the \dlhom is FPT for any skew decomposable graph $H$ parameterized by the solution size and $|H|$, an algorithmic meta-result unifying the fixed parameter tractability of some well-known problems. To achieve this, we welded together a number of classical and recent techniques from the FPT toolbox in a novel way. Our research suggests many open problems, four of which are:
\begin{enumerate}
  \item If $H$ is a fixed bipartite graph whose complement is a circular arc graph, is \dlhom FPT parameterized by solution size? (Conjecture~\ref{main_conj}.)
  \item If $H$ is a fixed digraph such that \lhom is in logspace (such digraphs have been recently characterised in \cite{LNLdicho}), is \dlhom FPT parameterized by solution size?
  \item If $H$ is a matching consisting of $n$ edges, is \dlhom FPT, where the parameter is only the size of the deletion set?
  \item Consider \dlhom for target graphs $H$ in which both vertices with and without loops are allowed. It is known that for such target graphs \lhom is in P if and only if $H$ is a \emph{bi-arc} graph \cite{FederHH03}, or equivalently, if and only if $H$ has a \emph{majority polymorphism}. If $H$ is a fixed bi-arc graph, is there an FPT reduction from \dlhom to $\ell$-CDCS, where $\ell$ depends only on $|H|$?
\end{enumerate}
Note that for the first problem, we already do not know if \dlhom is FPT when $H$ is a path on 7 vertices. (If $H$ is a path on
6 vertices, there is a simple reduction to \textsc{Almost 2-SAT} once we ensure that the instance has fixed side lists.)
Observe that the third problem is a generalization of the \textsc{Vertex Multiway Cut} problem parameterized only by the
cutset. For the fourth problem, we note that the FPT reduction from \dlhom to CDCS for graphs without loops relies
on the fixed side nature of the lists involved. Since the presence of loops in $H$ makes the concept of a fixed side list
meaningless, it is not clear how to achieve such a reduction.

\bibliographystyle{abbrv}
\bibliography{u1}

\end{document}